\newcommand\inter[1]{\llbracket #1 \rrbracket}
\newcommand\delay{\text{del}}
\newcommand\aut{\mathcal{A}}
\newcommand\tra{\mathcal{T}}
\newcommand\inp{\mathbbmtt{i}}
\newcommand\outp{\mathbbmtt{o}}
\newcommand\dom{\text{dom}}
\newcommand{\alphint}[1]{I_{#1}}
\newcommand{\nojp}[1]{\psi_{#1}}
\newcommand{\prodalp}[1]{\Sigma_{#1}}
\newcommand{\produ}[1]{\psi_{#1}}
\newcommand{\swap}{f}
\newcommand\out{\text{out}}
\newcommand\del{\text{del}}
\newcommand{\pick}[1]{\phi(#1)}
\newcommandx{\yaHelper}[2][1=\empty]{%
\ifthenelse{\equal{#1}{\empty}}%
  { \ensuremath{ \scriptstyle{ #2 } } } 
  { \raisebox{ #1 }[0pt][0pt]{ \ensuremath{ \scriptstyle{ #2 } } } }  
}   
\newcommandx{\yrightarrow}[4][1=\empty, 2=\empty, 4=\empty, usedefault=@]{%
  \ifthenelse{\equal{#2}{\empty}}
  { \xrightarrow{ \protect{ \yaHelper[ #4 ]{ #3 } } } } 
  { \xrightarrow[ \protect{ \yaHelper[ #2 ]{ #1 } } ]{ \protect{ \yaHelper[ #4 ]{ #3 } } } } 
}
\newcommand{\myxrightarrow}[1]{\yrightarrow{#1}[-2pt]}
\newtheorem{proposition}[theorem]{Proposition}
\title{The Complexity of Transducer Synthesis from Multi-Sequential Specifications}
\titlerunning{Transducer Synthesis from Multi-Sequential Specifications}
\author{L\'eo Exibard}{Aix-Marseille Universit\'e, Marseille, France \\ Universit\'e libre de Bruxelles, Brussels, Belgium}{leo.exibard@ulb.ac.be}{}{L. Exibard is a PhD student funded by a FRIA fellowship from the F.R.S.-FNRS.}
\author{Emmanuel Filiot}{Universit\'e libre de Bruxelles, Brussels, Belgium}{efiliot@ulb.ac.be}{}{E. Filiot is a research associate of F.R.S.- FNRS. He is supported by the ARC Project Transform F\'ed\'eration Wallonie-Bruxelles and the FNRS CDR project J013116F.}
\author{Isma\"el Jecker}{Universit\'e libre de Bruxelles, Brussels, Belgium}{ismael.jecker@ulb.ac.be}{}{I. Jecker is an ``aspirant FNRS'' PhD student, funded by the F.R.S.-FNRS.}
\subjclass{\ccsdesc[500]{Theory of computation~Logic~Logic and verification},\ccsdesc[500]{Theory of computation~Formal languages and automata theory~Automata extensions~Transducers}}
\keywords{Transducers, Multi-Sequentiality, Synthesis}
\authorrunning{L.\,Exibard, E.\,Filiot and I.\,Jecker}
\begin{document}

\maketitle

\begin{abstract}
    The transducer synthesis problem on finite words asks, given a
    specification $S \subseteq I \times O$, where $I$ and $O$ are sets of
    finite words, whether there exists 
    an implementation $f: I \rightarrow O$ which (1) fulfils the specification, i.e., $(i,f(i))\in S$
    for all $i\in I$, and (2) can be defined by some
    input-deterministic (aka sequential) transducer $\mathcal{T}_f$.
    If such an implementation $f$ exists, the procedure should also output $\mathcal{T}_f$.
    The realisability problem is the corresponding decision problem.\par \noindent
    For specifications given by synchronous
    transducers (which read and write alternately one symbol), this is
    the finite variant of the classical synthesis problem on
    $\omega$-words, solved by B\"uchi and Landweber in 1969,
    and the realisability problem is known to be \textsf{ExpTime-c}
    in both finite and $\omega$-word settings. 
    For specifications given by asynchronous
    transducers (which can write a batch of symbols, or none, in a single step),
    the realisability problem is known to be undecidable.\par \noindent
    We consider here the class of multi-sequential
    specifications, defined as finite unions of sequential
    transducers over possibly incomparable domains.
    We provide optimal decision procedures for the realisability problem
    in both the synchronous and asynchronous setting,
    showing that it is \textsf{PSpace-c}.
    Moreover, whenever the specification is realisable,
    we expose the construction of a sequential transducer that realises it
    and has a size that is doubly exponential, which we prove to be optimal.
\end{abstract}

\section{Introduction}

\subparagraph{The realisability and synthesis problem} In general, the
realisability problem is given by some input and output data domains
$D_\inp, D_\outp$, a \emph{specification} $S \subseteq D_\inp\times D_\outp$
defining for every $d\in D_\inp$ the set of allowed outputs $\{ d'\in D_\outp\mid
(d,d')\in S\}$ (assumed to be non-empty) and a class
of target \emph{implementations} $\mathcal{I}$ consisting of (total) functions
$D_\inp\rightarrow D_\outp$. It asks whether there exists $f\in\mathcal{I}$ such
that for all $d\in D_\inp$, $(d,f(d))\in S$, i.e., the implementation $f$
satisfies the specification. The synthesis problem asks to generate (a
representation of) $f$. So, instead of
designing $f$ and verifying its correctness \emph{a posteriori}, a synthesis algorithm automatically generates $f$
from it, making it \emph{correct by construction}. The
underlying idea behind synthesis is that the
specification may be written in a high-level language, e.g. a logic, and an implementation is a low-level computational
model e.g. an automaton. It is based on the assumption that it is
less error-prone to design a specification, i.e. to describe
\emph{what} a system has to do, than designing the system
itself, i.e. describing \emph{how} it must do it.

\subparagraph{Synchronous transducer synthesis} In the original setting defined
by Church \cite{Chur62,Thomas08}, $D_\inp,D_\outp$ are sets of $\omega$-words over
two alphabets $\Sigma_\inp, \Sigma_\outp$ respectively, and the
specification $S$ is given by an $\omega$-language $L\subseteq
(\Sigma_\inp{\times} \Sigma_\outp)^\omega$ as follows: $S = \{ (\pi_1(w),\pi_2(w))\mid w\in
L\}$, where $\pi_i$ is the projection on the $i$th component. The language $L$ is
represented by an MSO-sentence or, equivalently, an automaton. Such automata are also
called (non-deterministic) \emph{synchronous transducers}, as they can be seen as machines
alternately reading one input symbol and synchronously producing one output
symbol. In Church's setting, the target implementations are synchronous
\emph{sequential} transducers (also called \emph{input-deterministic}): they alternately read one input 
symbol and deterministically produce a symbol to output. Determinism
is required because implementations are required to use only
finite-memory. The
Church's instance of the realisability problem is decidable if the
specification is given in MSO and
\textsf{ExpTime-c} if it is given by a synchronous transducer \cite{BuLa69}.
For LTL specifications, it is 
\textsf{2ExpTime-c} \cite{PnuRos:89}. Motivated by reactive systems,
the synthesis problem from LTL specifications has been revisited recently with
efficient symbolic methods~\cite{Jobstm07c,springerlink:10.1007/978-3-540-75596-8_33,FiliotJR11,conf/cav/Ehlers10,DBLP:journals/sttt/JacobsBBEHKPRRS17}. The
synthesis problem in general has also motivated an active research on
infinite games~\cite{DBLP:journals/jacm/AlurHK02,DBLP:conf/concur/AlfaroHM01,DBLP:journals/iandc/ChatterjeeHP10}.

\subparagraph{Asynchronous transducer synthesis} In the asynchronous
setting, specifications may not strictly alternate between input and output
symbols, hence they can no longer be seen as languages over
$\Sigma_\inp\times \Sigma_\outp$. Similarly, the target implementations may
not be synchronous: the system can delay its production of outputs,
or produce several symbols at once. Transducers, in contrast to 
synchronous transducers, are by definition asynchronous: their
transitions are labelled by pairs $(i,w)$ where $i\in\Sigma_\inp$ is a
symbol and $w\in\Sigma_\outp^*$ a word, possibly empty. Since they are generally non-deterministic, to a
single input word may correspond several output words, and thus
transducers define subsets of $\Sigma_\inp^*\times
\Sigma_\outp^*$. Therefore, they are well-suited to represent
(asynchronous) specifications, and in their sequential version,  asynchronous implementations.  
Any asynchronous specification is realisable by some unambiguous
(functional) asynchronous transducer~\cite{journals/iandc/Kobayashi69,Eilenberg:1974:ALM:540337,berstel2009}. However, evaluating  
unambiguous transducers on arbitrarily long or even infinite input words
may require arbitrarily large memory. Therefore, just as in
Church's setting, a sequentiality requirement can be put on
implementations for efficient memory usage. However, the realisability of asynchronous specifications by (asynchronous)
sequential transducers, which is called the \emph{sequential
  uniformisation problem} in transducer-theoretic terms, is
undecidable for finite words~\cite{CarayolL14,DBLP:conf/icalp/FiliotJLW16}. If the
specification is finite-valued (i.e. any input word has a constant
number of output words), the problem is in
\textsf{3ExpTime}~\cite{DBLP:conf/icalp/FiliotJLW16}. The proof
of~\cite{DBLP:conf/icalp/FiliotJLW16} is based on Ramsey's theorem and word 
combinatorics arguments, and it is not clear how to reduce the 
complexity. This raises the question of whether there are natural and
non-trivial subclasses of asynchronous specifications with better
complexity. 


\subparagraph{Multi-sequential specifications} In this
paper, we consider a class of specifications $S$ on finite words,
i.e. $S\subseteq \Sigma_\inp^*\times \Sigma_\outp^*$, which strictly restricts
the class of finite-valued specifications to so-called
multi-sequential specifications. Such class
is obtained by closure under finite unions of graphs
of sequential functions. Precisely, $S = \bigcup_{i=1}^n S_i$ where $S_i$ is the
graph of a (partial) function $f_i : \Sigma_\inp^*\rightarrow
\Sigma_\outp^*$ defined by a sequential  transducer. Likewise, a transducer is
multi-sequential if it is a union of state-disjoint sequential transducers. For instance, consider the specification $S$ consisting of the pairs
$(w,w')$ such that $w'$ is a subword of $w$ of fixed length $k$. This
specification is multi-sequential: $S = \bigcup_{w' \in \Sigma_\inp^k}
S_{w'}$ where $S_{w'} = \{ (w,w')\mid w'\text{ subword of }
w\}$. Clearly, $S_{w'}$ can be represented by a sequential transducer,
because $w'$ is fixed: once the first symbol of $w'$ is met in $w$,
output it, and proceed to the second symbol of $w'$, etc., until the
last symbol of $w'$ is produced, reject otherwise.
The notion of multi-sequentiality has
been introduced for functions in \cite{DBLP:conf/stacs/ChoffrutS86} and studied for relations in
\cite{DBLP:conf/dlt/JeckerF15}.  An important property of
the class of multi-sequential specifications is its decidability in
{\sf PTime}: Given a transducer, it is decidable in \textsf{PTime}
whether it defines a multi-sequential
specification~\cite{DBLP:conf/dlt/JeckerF15}. This fact and their
natural definition as closure of sequential functions under union make
multi-sequential specifications a good candidate for a class of
specifications with better complexity than the known results of the
literature. 


\subparagraph{Contributions} We investigate the complexity of sequential
transducer synthesis from specifications defined by multi-sequential transducers
on finite words. We show that both in the synchronous and asynchronous settings,
the realisability problem is \textsf{PSpace-complete}. To the best of our
knowledge, it is the first non-trivial class of specifications which admits a
realisability test below \textsf{ExpTime}. If the specification is realisable,
we show how to extract an implementation as a winning strategy in a two-player
game called the \emph{synthesis game}. It is parameterised by a value
$k\in\mathbb{N}$ which bounds the maximal number of output symbols which can be
queued before being outputted, allowing for an incremental synthesis algorithm.
To keep track of such output symbols, we use the notion of
\emph{delay}~\cite{BealCPS03}.

\subparagraph{Difficulties and examples} 
Let us briefly explain what are the main difficulties to 
overcome. Consider $S = \bigcup_i S_i$ a multi-sequential
specification. If all of the $S_i$ have disjoint domains, then $S$ is
a function, which is realisable by a sequential transducer iff it is
sequential. The latter can be tested in {\sf
  PTime}~\cite{BealCPS03}. The problem becomes more interesting and
challenging when the $S_i$ have domains which are not necessarily
disjoint. Consider for example the 2-sequential transducer
$\mathcal{D}_1\cup \mathcal{D}_2$ of Fig.~\ref{fig:ex}.
The transducer $\mathcal{D}_1$ accepts the words containing at least two $a$'s, and replaces $b$'s with $a$'s,
and $\mathcal{D}_2$ accepts the words containing at least one $b$, and replaces $a$'s with $b$'s.
This specification can be realised by a sequential
transducer which waits two steps before outputting something, since it then knows whether the input contains at least one $b$
or two $a$'s. It then behaves as
$\mathcal{D}_1$ in the first case, and as
$\mathcal{D}_2$ in the second.

This example shows that a sequential realiser may have to wait before
reacting, keeping in memory what remains to be output in the
future.
Take on the contrary the $2$-sequential transducer 
$\mathcal{D}'_1\cup \mathcal{D}'_2$ of Fig.~\ref{fig:ex}, which is the same
as $\mathcal{D}_1\cup \mathcal{D}_2$ except that it can additionally
read and copy $c$'s at any moment. In that
case, a sequential realiser would have to store arbitrary long
sequences of $c$'s before outputting them, for instance when processing
words in $ac^*\{a,b,c\}^*$. In particular, this specification is not
sequentially realisable. 

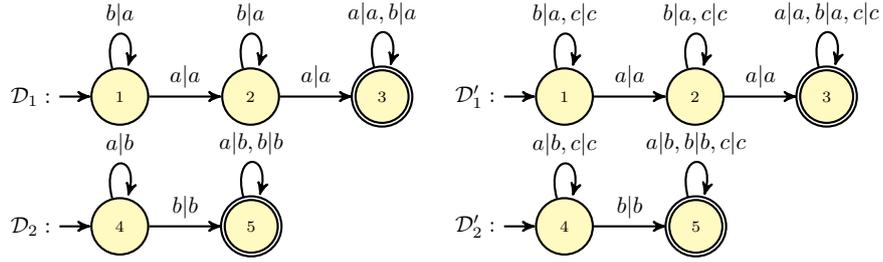
\begin{figure}[ht]
\centering

\begin{tikzpicture}[->,>=stealth',auto,node distance=0.95cm,thick,scale=0.9,every node/.style={scale=0.85}]
\tikzstyle{every state}=[fill=yellow!30,text=black, font=\scriptsize]

\node[state, initial, initial text={$\mathcal{D}_1:$}]			(i)	{1};
\node[state, right= of i]								(p)	{2};
\node[state, right= of p,accepting]						(q)	{3};

\path (i) edge				node[above]	{$a|a$}	(p); 
\path (p) edge				node[above]	{$a|a$}	(q); 
\path (i) edge[loop above]		node[above]	{$b|a$}	(i); 
\path (p) edge[loop above]		node[above]	{$b|a$}	(p); 
\path (q) edge[loop above]		node[above]	{$a|a,b|a$}	(q);

\node[state, initial, initial text={$\mathcal{D}_2:$}, below = of i]	(i1)	{4};
\node[state, right= of i1,accepting]						(p1)	{5};

\path (i1) edge				node[above]	{$b|b$}	(p1); 
\path (i1) edge[loop above]	node[above]	{$a|b$}	(i1); 
\path (p1) edge[loop above]	node[above]	{$a|b,b|b$}	(p1);

\node[state, initial, initial text={$\mathcal{D}_1':$}]	at(6.5,0)	(i)	{1};
\node[state, right= of i]								(p)	{2};
\node[state, right= of p,accepting]						(q)	{3};

\path (i) edge				node[above]	{$a|a$}	(p); 
\path (p) edge				node[above]	{$a|a$}	(q); 
\path (i) edge[loop above]		node[above]	{$b|a,c|c$}	(i); 
\path (p) edge[loop above]		node[above]	{$b|a,c|c$}	(p); 
\path (q) edge[loop above]		node[above]	{$a|a,b|a,c|c$}	(q);

\node[state, initial, initial text={$\mathcal{D}_2':$}, below = of i]	(i1)	{4};
\node[state, right = of i1,accepting]						(p1)	{5};

\path (i1) edge				node[above]	{$b|b$}	(p1); 
\path (i1) edge[loop above]	node[above]	{$a|b,c|c$}	(i1); 
\path (p1) edge[loop above]	node[above]	{$a|b,b|b,c|c$}	(p1); 

\end{tikzpicture}

\caption{Four synchronous sequential transducers.}
\label{fig:ex}
\end{figure}


\subparagraph{Structure of the paper}
After a formal definition of transducer synthesis (Section~\ref{sec:prelims}), we solve the
synchronous case and provide a characterisation of realisable
synchronous multi-sequential specifications, decidable in
\textsf{PSpace} (Section~\ref{sec:synchronous}). Then, we present the
notion of synthesis game (Section~\ref{sec:delay}), which is
a useful tool for the proofs and also to get a synthesis
procedure. For the asynchronous setting, we define a recursive characterisation of realisable multi-sequential
specifications and show that it can be decided in
\textsf{PSpace} (Section~\ref{sec:async}).

\subparagraph{Related work}
Games with delays have been used in~\cite{CarayolL14,DBLP:conf/csl/FridmanLZ11}. Perhaps the closest formulation to ours is that of~\cite{CarayolL14}. However, it is tailored to automatic relation. Our game structure is more general, as it is defined for uniformising any transducer (defining a rational relation). In particular, our game structure is exponentially larger that the one of~\cite{CarayolL14}.

We would also like to mention an interesting related line of
works on $\omega$-words, where the specification is synchronous, but the
implementation may be asynchronous
\cite{DBLP:conf/csl/FridmanLZ11,conf/fossacs/HoltmannKT10,DBLP:journals/corr/KleinZ14,DBLP:conf/fsttcs/KleinZ16,DBLP:journals/ita/Zimmermann16,abs-1709-03539,Zimmermann17}. 
Unlike the setting where the specification and implementations
are both asynchronous, the realisability problem is decidable here,
for $\omega$-regular specifications (i.e., regular
$\omega$-languages over $\Sigma_\inp\times \Sigma_\outp$), and 
\textsf{ExpTime-c} if the specification is given by a parity automaton
\cite{DBLP:journals/corr/KleinZ14}. In this
setting, the authors often consider a notion of delay games.
In these games, the delay is a quantitative notion,
corresponding to the waiting time before outputting a symbol, while
for us, a delay is a word that still remains to be output (this is a
standard terminology in transducer theory). It is known in particular that
constant ``waiting time'' (depending on the specification) is always
sufficient to win, for $\omega$-regular specifications. This
is different to our setting:
for instance, the function $f$ mapping any word of the form
$a^n\sigma$, for $n>0$ and $\sigma\in\{a,b\}$, to $\sigma$ is
realisable by a sequential transducer, but the production of $a$ and $b$
might have to be delayed for an unbounded amount of time.

\section{Transducer synthesis problem}\label{sec:prelims}

\subparagraph{Words} 
For an alphabet $\Sigma$, we
denote by $\Sigma^*$ the set of finite words over it, and by
$\epsilon$ the empty word. The length $|w|$ of a word $w$ is its
number of symbols. For $k\in\mathbb{N}$, we denote by $\Sigma^k$
(resp. $\Sigma^{\leq k}$) the set of words of length $k$ (resp. at most
$k$). For $u,v\in\Sigma^*$, we
write $u\preceq v$ if $u$ is a prefix of $v$, and denote by $u^{-1}v$
the word such that $u(u^{-1}v)=v$. For $L\subseteq \Sigma^*$, the residual language $u^{-1}L$ is 
$u^{-1}L = \{ u'\mid uu'\in L\}$. Given $S\subseteq
\Sigma^*{\times} \Gamma^*$, and $(u,v)\in \Sigma^*{\times}\Gamma^*$, 
the residual relation
$(u,v)^{-1}S$ is defined by $(u,v)^{-1}S = \{
(u',v')\mid (uu',vv')\in S\}$.

\subparagraph{Automata} 
In this paper, finite (non-deterministic) automata
over an alphabet $\Sigma$ are denoted as
tuples $\aut = (\Sigma,Q,I,F,\Delta)$ where $\Sigma$ is the
alphabet, $Q$ the set of states, among which $I$ (resp. $F$) denotes
the initial (resp. final or accepting) states, and $\Delta \subseteq
Q\times \Sigma\times Q$ is the transition relation. $\aut$ is
deterministic if there is only one initial state and for all $(q,\sigma)\in Q\times \Sigma$,
there exists at most one $q' \in Q$ such that $(q,\sigma,q')\in\Delta$.

A \emph{run} of $\aut$ on a word $w=\sigma_1\dots \sigma_n$ consists in either a
single state $q\in Q$ if $n = 0$, or a sequence $r\in \Delta^*$ of $n$
transitions $t_1\dots t_n$ such that the target state of $t_i$ equals
the source state of $t_{i+1}$ for all $1\leq i<n$. It is said to be
initial if the source state of $t_1$ is initial, and accepting if the
target state of $t_n$ is accepting. If $p$ is the
source state of $t_1$ and $q$ the target state of $t_n$, we may write 
$p\myxrightarrow{w}_\aut q$ to mean that there exists a run from $p$ to
$q$ on $w$. The language accepted by an automaton $\aut$, denoted $L(\aut)$, 
is the set of words admitting an accepting run.
A state $q\in Q$ is
reachable (resp. co-reachable) if there is a run from an initial state
(resp. to a final state) for some $u\in\Sigma^*$.
A state is said to be \emph{useful} if it is both reachable and co-reachable, and
$\aut$ is said to be \emph{trim} if all its states are useful. It is
well-known that any automaton can be transformed into an equivalent
trim automaton in \textsf{PTime}. Given two automata $\aut_1 = (\Sigma,Q_1,I_1,F_1,\Delta_1)$ and
$\aut_2 = (\Sigma,Q_2,I_2,F_2,\Delta_2)$, their
disjoint union $\aut_1\uplus \aut_2$ is the automaton $(\Sigma,
Q_1\uplus Q_2, I_1\uplus I_2,F_1\uplus F_2,\Delta_1\uplus
\Delta_2)$.

\subparagraph{Transducers} A \emph{transducer}\footnote{Our
  definition is sometimes called \emph{real-time} transducer in the
  literature, in contrast to transducers with $\epsilon$-input
  transitions. For the purpose of this paper, this does not make a difference.} over
two alphabets $\Sigma,\Gamma$
is a tuple $\tra = (\aut,\rho,\tau)$ such that $\aut =
(\Sigma,Q,I,F,\Delta)$ is an automaton over $\Sigma$, called the
\emph{input automaton}, $\rho :
\Delta\rightarrow \Gamma^*$ is a mapping, called the output function, associating with every transition an
output word, and $\tau : F\rightarrow \Gamma^*$ is a
terminal function associating with every accepting state an output
word. Given a run $r$ of $\aut$ on a word $w$,
its output $\out(r)\in\Gamma^*$ is defined by $\epsilon$ if $w=\epsilon$, and by 
$\rho(t_1)\dots\rho(t_n)$ if $r = t_1\dots t_n$ for some $n\geq
1$. We write $p\myxrightarrow{u|v}_\tra q$ whenever there exists a run $r$
of $\aut$ on $u$ from $p$ to $q$, such that $v = \out(r)$, and say
that $r$ \emph{produces} $v$. The \emph{relation} defined by $\tra$ is the set $\inter{\tra}$ of pairs 
$(u,v\tau(q))\in\Sigma^*\times \Gamma^*$ such that $p\myxrightarrow{u|v}_\tra
q$ for $p\in I$ and $q\in F$. We define $\dom(\tra)$ by  $\dom(\tra) =
\dom(\inter{\tra}) = L(\aut)$. 

A transducer is \emph{trim} if its input automaton is trim. It is called \emph{sequential}
if its input automaton is deterministic, and
\emph{functional} if $\inter{\tra}$ is a function, i.e. for all
$u\in\dom(\tra)$, there exists at most one pair $(u,v)\in \inter{\tra}$. In
that case we let $\tra(u) = v$. Note that any sequential
transducer is functional. A transducer $\tra = (\aut,\rho,\tau)$ is called
\emph{synchronous} (or sometimes \emph{letter-to-letter} in the
literature) if, whenever it reads an input symbol, it produces
exactly one output symbol, i.e. for all transition $t$, $|\rho(t)|=1$,
and $\tau(q) = \epsilon$ for all accepting state $q$. For example, consider the transducer $\mathcal{D}_1$ on
Fig.~\ref{fig:ex} (the terminal function is assumed to output
$\epsilon$ and is not depicted).  It is sequential and synchronous. Its domain is 
$L = b^*ab^*a(a+b)^*$.

Two transducers are said to be
\emph{equivalent} if they define the same relation. 
Finally, the disjoint union of transducers is
naturally defined as the disjoint union of their input automata and
the disjoint union of their output functions (seen as graphs). For all
transducers $\tra_1,\tra_2$, we have $\inter{\tra_1\uplus
\tra_2} = \inter{\tra_1}\cup \inter{\tra_2}$.

\subparagraph{Transducer Synthesis Problem} Let
$\Sigma_\inp,\Sigma_\outp$ be two alphabets of input and output
symbols respectively. They may not necessarily be disjoint. A 
\emph{specification} is a subset of $\Sigma_\inp^*\times
\Sigma_\outp^*$, and an \emph{implementation} is a
function, possibly partial, from $\Sigma_\inp^*$ to $\Sigma_\outp^*$. 
The transducer \emph{realisability} problem asks, given a
specification $S$ defined by a transducer $\tra$, i.e. $S =
\inter{\tra}$, whether there exists a sequential transducer
$\mathcal{I}$ such that $(1)$ $\dom(\mathcal{I}) = \dom(\tra)$ and
$(2)$ for all $u\in \dom(\tra)$, $(u,\mathcal{I}(u))\in \inter{\tra}$.
In that case, we say that $\mathcal{I}$ realises $S$ (or $\tra$), and
that $S$ is realisable by a sequential transducer, or sequentially
realisable. We also say that $\mathcal{I}$ is a \emph{realiser} of
$S$. The synthesis problem asks to output $\mathcal{I}$. 
The realisability problem is undecidable in general~\cite{CarayolL14,DBLP:conf/icalp/FiliotJLW16}, but
decidable, in \textsf{3ExpTime}, if $\tra$ is finite-valued, i.e. there exists
$k\in\mathbb{N}$ such that for all $u\in \dom(\tra)$, 
$|\{ v\mid (u,v)\in\inter{\tra}\}|\leq k$ \cite{DBLP:conf/icalp/FiliotJLW16}. 


\subparagraph{Multi-sequential specifications} A transducer $\tra$ is called
\emph{$k$-sequential} if it is the disjoint union of $k$ sequential
transducers. It is called \emph{multi-sequential} if it is
$k$-sequential for some $k$. Observe that when the $k$ sequential
transducers have pairwise disjoint domains, then $\tra$ is functional,
but it may not be the case in general.
Deciding whether given a transducer $\tra$, there exists an equivalent 
multi-sequential transducer $\tra'$, can be done in \textsf{PTime}; however, $\tra'$ may be
exponentially larger than $\tra$ \cite{DBLP:conf/dlt/JeckerF15}. Minimising the number of
sequential transducers of the disjoint union is also doable: 
deciding whether $\tra$ is
equivalent to some $k$-sequential transducer for $k$ given in unary
is decidable in \textsf{PSpace} \cite{DBLP:conf/fossacs/DaviaudJRV17}.
In this paper, we consider \emph{multi-sequential specification},
i.e. relations $S \subseteq \Sigma_\inp^*\times \Sigma_\outp^*$
defined by multi-sequential transducers.

\subparagraph{\textsf{PSpace}-hardness}
\label{sec:PSpaceHardness}
In both the synchronous and asynchronous case, the realisability problem of
multi-sequential specifications by (a)synchronous sequential transducers is
\textsf{PSpace}-hard.

    We build a reduction from the emptiness problem of the intersection of $n$
DFA $\aut_1,\dots,\aut_n$ on some alphabet $\Sigma$, proven \textsf{PSpace-c}
in~\cite{conf/focs/Kozen77}. We define a specification $S$ over $\Sigma\cup
\{\#,a,b\}$ by $S = \bigcup_{i=1}^n (S_i\cup N_i)$ where $S_i = \{ (w \#^m
\sigma, w \sigma \#^m)\mid \sigma\in\{a,b\},m\geq 0, w\in L(\aut_i)\}$ and $N_i
= \{ (w \#^m \sigma, w \#^m \sigma)\mid \sigma\in\{a,b\}, m\geq 0, w\notin
L(\aut_i)\}$. If there exists $w\in \bigcap_{i=1}^n L(\aut_i)$, then on the
domain $w\#^*\{a,b\}$, the specification is a function which is not definable by
any sequential transducer, thus not sequentially realisable, since it would
imply counting the $\#$s (in the synchronous setting, it suffices to take $m=1$
since a synchronous transducer would be forced to guess the future). Conversely,
if $\bigcap_{i=1}^n L(\aut_i)=\varnothing$, then the identity function
(trivially definable by a synchronous sequential transducer) realises the
specification.

It is readily seen that each $S_i$ (resp. $N_i$) is definable by a 2-(resp.
1-)sequential transducer, hence $S$ is multi-sequential, concluding the proof.

\section{The synchronous setting}\label{sec:synchronous}

In this section, we consider first the synchronous setting, where the
specification is given as a disjoint union of synchronous 
sequential transducers, and the target implementations are synchronous
sequential transducers. Not only is this setting interesting in
itself, but it helps to understand the asynchronous setting. 
First, we characterise the realisable
specifications through a property called the \emph{residual
  property}, then we show it is decidable in \textsf{PSpace}. 

\subparagraph{Residual property} Let $\tra = \biguplus_{i=1}^n \mathcal{D}_i$ be an $n$-sequential transducer on
$\Sigma_\inp,\Sigma_\outp$. Intuitively, the residual property says
that if on some input prefix $u$, two sequential transducers of the
union disagree
on their outputs, i.e. produce different outputs, then a synchronous realiser
of $\inter{\tra}$ must ``drop'' one of the two transducers. However, it
must do so while preserving the residual domain $u^{-1}\dom(\tra)$,
i.e., the realiser must still accept any word of
$u^{-1}\dom(\tra)$. For example, consider again Fig.~\ref{fig:ex} 
and the specification defined by $\mathcal{D}_1\uplus
\mathcal{D}_2$. On input $a$, the two transducers disagree, hence, since we want a synchronous
realiser, a choice has to be made and therefore one of the two
transducers must be dropped. However, by doing so, the residual domain
will not be fully covered by the remaining transducer. For example, if
a realiser chooses to output $a$ when reading $a$, the residual language $b^*$ is not covered anymore. As a
matter of fact, $\mathcal{D}_1\uplus \mathcal{D}_2$ is not realisable
by any sequential and synchronous transducer.

Formally, let $u\in \Sigma_\inp^*$ and let $r_i,r_j$ be runs of some
$\mathcal{D}_i,\mathcal{D}_j$ respectively, on $u$. We say that $r_i$
and $r_j$ agree on their output if $\out(r_i) = \out(r_j)$.  Now, $u$ is called \emph{smooth} if every $\mathcal{D}_i$ admits 
an initial run on input $u$, and all these runs agree on the corresponding output.
The word $u$ is called \emph{critical} if it is not smooth.

We say that $\tra$ satisfies the \emph{residual property} if for every critical prefix $u \in \Sigma^*_\inp$ of a word of $\dom(\tra)$,
there exists a subset $P \subsetneq \{ 1, \ldots, n\}$ satisfying:
\begin{enumerate}
\item
All the transducers $\mathcal{D}_i$, $i \in P$, produce the same
output $\pick{u}$ on $u$;
\item
$u^{-1} \dom(\tra) = \bigcup_{i \in P} u^{-1} \dom(\mathcal{D}_i)$;
\item
$\biguplus_{i \in P} (u,\pick{u})^{-1} \inter{\mathcal{D}_i}$ is realisable by a synchronous and sequential transducer.
\end{enumerate}


\begin{restatable}{theorem}{characsync}\label{thm:characsync}
    A specification $S$ defined by a synchronous
    multi-sequential transducer $\tra$ is realisable by a
    synchronous sequential transducer iff $\tra$ satisfies the
    residual property. 
\end{restatable}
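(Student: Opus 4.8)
The plan is to prove the two implications separately, reading condition~3 of the residual property simply as an assertion that the residual specification is realisable; no circularity arises, since every $P$ is a \emph{proper} subset, so each residual involves strictly fewer than $n$ sequential transducers.

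\emph{From realisability to the residual property.} Suppose $S$ is realised by a synchronous sequential transducer $\mathcal{I}$, and let $u$ be a critical prefix of a word of $\dom(\tra)$. Since $\mathcal{I}$ is synchronous and sequential with $\dom(\mathcal{I})=\dom(\tra)$, its run on $u$ is defined and outputs a word $w$ with $|w|=|u|$. I would take $\pick{u}=w$ and $P=\{\, i \mid \mathcal{D}_i$ has a run on $u$ producing $w\,\}$. Condition~1 holds by definition of $P$. For condition~2 only the left-to-right inclusion needs work: if $uu''\in\dom(\tra)$ then $\mathcal{I}(uu'')=w w''$ for some $w''$, and since $\mathcal{I}$ realises $S$ there is $k$ with $(uu'',w w'')\in\inter{\mathcal{D}_k}$; the run of $\mathcal{D}_k$ on the prefix $u$ outputs $w$, hence $k\in P$ and $u''\in u^{-1}\dom(\mathcal{D}_k)$. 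For condition~3, running $\mathcal{I}$ from the state it reaches after $u$ and deleting the prefix $w$ from its outputs yields a synchronous sequential realiser of $\biguplus_{i\in P}(u,w)^{-1}\inter{\mathcal{D}_i}$. Finally, $P\subsetneq\{1,\dots,n\}$ holds precisely because $u$ is critical: either some $\mathcal{D}_j$ has no run on $u$, so $j\notin P$, or two transducers produce distinct outputs on $u$, at most one of which equals $w$.

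\emph{From the residual property to realisability.} I would build the realiser incrementally. It simulates all the $\mathcal{D}_i$ in parallel, and as long as the current prefix is smooth it outputs the unique symbol on which the live transducers agree, accepting whenever one of them does; this ``smooth'' part is a finite product automaton. On reading the first critical prefix $u=u'\sigma$, I invoke the residual property: since $u'$ is smooth, all live transducers share a common output $v'$ on $u'$, and those in $P$ (having runs on $u$) also output $v'$ on $u'$, so $\pick{u}=v' x$ for a single symbol $x$, which is emitted; control is then handed to a fixed synchronous sequential realiser of $\biguplus_{i\in P}(u,\pick{u})^{-1}\inter{\mathcal{D}_i}$ provided by condition~3. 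Correctness uses condition~2: for a domain word $uu''$ whose first critical prefix is $u$, we have $u''\in u^{-1}\dom(\tra)=\dom\bigl(\biguplus_{i\in P}(u,\pick{u})^{-1}\inter{\mathcal{D}_i}\bigr)$, so the handed-over realiser accepts $u''$ and returns some $v''$ with $(uu'',\pick{u}\,v'')\in\inter{\mathcal{D}_i}\subseteq S$ for some $i\in P$; words all of whose prefixes are smooth are handled directly by the smooth part.

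The step I expect to be the main obstacle is making this second construction genuinely finite-state, since the residual property quantifies over infinitely many prefixes $u$ while a sequential transducer may only branch on bounded information. The resolving observation is that, because each $\mathcal{D}_i$ is deterministic, the residual domain $u^{-1}\dom(\tra)$, the residual relations $(u,\pick{u})^{-1}\inter{\mathcal{D}_i}$, and the output symbol committed at a critical step all depend only on the tuple of states reached by the $\mathcal{D}_i$ after $u$ (together with the triggering input letter), and not on $u$ itself. Hence the choice of $P$, of $\pick{u}$, and of the handed-over realiser can be fixed once per configuration; as there are finitely many configurations, the smooth product automaton together with finitely many copies of the condition~3 realisers assembles into a single finite synchronous sequential transducer realising $S$.
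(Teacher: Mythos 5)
Your proposal is correct and takes essentially the same route as the paper: the forward direction chooses exactly the same set $P$ (the transducers whose output on $u$ matches the realiser's), and the backward direction is the same synchronised-product construction that tracks all the $\mathcal{D}_i$ until the first disagreement and then invokes the residual property. The only difference is one of presentation: where the paper's sketch says the realiser ``goes on simulating the transducers of $P$ in parallel'', you hand over to a realiser supplied directly by condition~3 and justify finite-stateness by observing that all residual data depend only on the tuple of states reached by the deterministic $\mathcal{D}_i$ --- a detail the paper glosses over but which is indeed what makes the construction well-defined and finite.
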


\begin{proof}[Sketch]
    If $\inter{\tra}$ is realised by a synchronous sequential transducer $\mathcal{U}$, for every critical prefix $u$,
    let $P$ be the set of $i$ such that $\mathcal{D}_i$ and $\mathcal{U}$ map the same output to $u$. Property 1
    is satisfied by definition, and the other two follow from the fact that $\mathcal{U}$ is sequential and realises $\inter{\tra}$.

    Conversely, if the residual property is satisfied, we can
    construct a synchronous and sequential realiser. The idea is to
    make a synchronised product of all the transducers
    $\mathcal{D}_i$, and, whenever on some input symbol $\sigma$ at
    least two of them disagree on the output, we know by the residual property 
    that there exists a subset $P$ of them having the good properties
    $1,2,3$. Then, the realiser just goes on simulating all the
    transducers $\mathcal{D}_i$ corresponding to  $P$ in parallel.

    It also shows that if the property is satisfied, then we can 
    synthesise a realiser, which might however be exponentially larger than
    $\tra$.
\end{proof}

\begin{theorem}\label{thm:complexitysync}
The realisability problem of synchronous multi-sequential
specifications by synchronous sequential transducers is
\textsf{PSpace}-complete. 
\end{theorem}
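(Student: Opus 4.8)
The plan is as follows. \textsf{PSpace}-hardness is already established by the reduction of Section~\ref{sec:PSpaceHardness} (which, as noted there, applies already in the synchronous setting by taking $m=1$), so I only need to show membership in \textsf{PSpace}. By Theorem~\ref{thm:characsync}, realisability is equivalent to the residual property, so it suffices to decide, given $\tra=\biguplus_{i=1}^{n}\mathcal{D}_i$, whether $\tra$ satisfies the residual property using only polynomial space.

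First I would reduce the relevant information after reading a prefix $u$ to a polynomial-size \emph{profile}. Since each $\mathcal{D}_i$ is sequential, $u$ induces at most one run, ending in a state $q_i(u)$ (or a dead marker $\bot$). Since $\tra$ is synchronous, this run outputs a word of length exactly $|u|$, so two components agree on $u$ iff their outputs coincide position by position; crucially, once two outputs differ on a prefix they differ on every extension, hence the partition of the alive indices into output-agreement classes can only get finer as $u$ grows. I would therefore take the profile of $u$ to be the pair consisting of the tuple $(q_1(u),\dots,q_n(u))$ and this agreement partition. Each profile has polynomial size, the initial profile is computable, and successors under a single input symbol are computable in polynomial time, so the reachable profiles form a graph with exponentially many but individually polynomial-size vertices, which I would explore on the fly rather than materialise.

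Next I would recast the residual property over this graph. A prefix is critical exactly when its profile has a dead component or a non-trivial agreement partition, and it is a prefix of a word of $\dom(\tra)$ exactly when some alive component can reach an accepting state, a co-reachability query in the profile graph. For such a profile, a candidate $P$ meets Condition~1 iff it lies within a single block of the partition; Condition~2 becomes the inclusion $\bigcup_{i=1}^{n} L(\mathcal{A}_i,q_i(u))\subseteq\bigcup_{i\in P}L(\mathcal{A}_i,q_i(u))$ between unions of input automata started at the current states, an \textsf{NFA}-inclusion test decidable in \textsf{PSpace}; and Condition~3 is the realisability of the residual $\biguplus_{i\in P}\mathcal{D}_i[q_i(u)]$, where $\mathcal{D}_i[q]$ is $\mathcal{D}_i$ with initial state $q$. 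The last point is again a synchronous multi-sequential realisability instance, but with strictly fewer components because $P\subsetneq\{1,\dots,n\}$, so I would handle it by recursion (its base case, $|P|=1$, being a single sequential transducer, hence trivially realisable).

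Finally I would assemble a nondeterministic procedure that searches for a \emph{bad} profile, i.e. a reachable, co-reachable, critical profile for which \emph{every} $P$ violates one of Conditions~1--3; $\tra$ satisfies the residual property iff no such profile exists, and \textsf{PSpace} is closed under complement. Reachability and co-reachability are checked by nondeterministic walks whose length is bounded by the exponential number of profiles and hence tracked with a polynomial-bit counter, and Condition~2 by the \textsf{PSpace} inclusion subroutine. The hard part, and the place I would spend the most care, is the recursion on Condition~3: the strict inequality $|P|<n$ bounds its depth by $n$, and since each recursive instance is described in polynomial space by its list of (index, initial state) pairs and each frame reuses only polynomial working space, the stacked frames stay within polynomial space overall; together with Savitch's theorem ($\textsf{NPSpace}=\textsf{PSpace}$) this yields the claimed upper bound. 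The main obstacle is thus to keep the exponential product implicit throughout while ensuring the recursion triggered by Condition~3 cannot blow up the space, which the monotone refinement of the agreement partition and the strict decrease of $|P|$ together guarantee.
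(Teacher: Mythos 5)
Your proposal is correct and follows essentially the same route as the paper: \textsf{PSpace}-hardness via the DFA-intersection reduction, and membership by deciding the residual property of Theorem~\ref{thm:characsync} through an on-the-fly nondeterministic simulation of all components (with an exponentially bounded, counter-tracked witness length), checking Condition~1 locally, Condition~2 by a \textsf{PSpace} automata-inclusion test, and Condition~3 by a recursion of depth at most $n$, concluding via $\textsf{NPSpace}=\textsf{PSpace}$ and closure under complement. Your explicit ``profile'' with the monotonically refining agreement partition is just a cleaner packaging of what the paper does implicitly when it inspects the symbols produced on the last input letter.
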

\begin{proof}[Sketch]
  The \textsf{PSpace}-hardness is obtained by reducing the problem
  from the emptiness problem of the intersection of $n$ DFAs
  (\emph{cf} Section~\ref{sec:prelims} p.~\pageref{sec:PSpaceHardness}).

    To show membership to \textsf{PSpace}, given a transducer $\tra =
    \biguplus_{i=1}^n \mathcal{D}_i$, we show that the residual
    property can be tested by a non-deterministic algorithm running in
    polynomial space. First, we bound the size of witnesses of the
    negation of the property:
    roughly, if there is such witness, namely a critical prefix $u$, then there
    exists a critical prefix $v$ of exponential length (in $\tra$) such that for
    any subset $P\subsetneq \{1,\dots,n\}$, one of the conditions
    $1,2,3$ is falsified. Then, the algorithm guesses the prefix
    $v$ on the fly, simulating all transducers $\mathcal{D}_i$ in
    parallel and keeping their states in memory
    (it also needs a counter for the length of $v$). As soon as
    the transducers disagree on an output symbol, for each subset
    $P\subsetneq \{1,\dots,n\}$ (they can obviously be enumerated
    using only polynomial space), the algorithm checks whether 
    property $1$, $2$ or $3$ is falsified. Checking property $1$
    is easy: it suffices to look at the symbols produced when reading
    the last input symbol. Checking property $2$ can be done using the
    current set of states reached by the transducers on input $v$, and
    by using any \textsf{PSpace} algorithm for automata
    inclusion. Finally, to check property $3$, it suffices to
    recursively apply the \textsf{PSpace} algorithm described so far
    on a smaller set of transducers. The stack of recursive calls is
    linear in $n$, hence the memory used by the whole procedure  
    remains polynomial.
\end{proof}

\section{The synthesis game}\label{sec:delay}

We now define a 2-player safety game from a transducer $\tra$ such
that if Eve wins the game then $\tra$ is realisable by a sequential
transducer. This game notion will prove useful to show the
correctness of the characterisation of Theorem~\ref{thm:charac},
and may also be used as a practical way
to synthesise implementations, as winning strategies of this game. 
In the asynchronous setting, two different runs of a transducer on
the same input word may not only produce different outputs, but also the same
output at different rates (i.e. one run is ahead, output-wise, of the
other for some time). This leads us to the notion of delays, a classical
tool to compare outputs in transducer theory. Let us define this
notion formally. 

\subparagraph{Delays}
Given two words $u_1,u_2\in\Sigma^*$, their longest common prefix $\ell$ is denoted by
$u_1\wedge u_2$. The \emph{delay} between $u_1$ and $u_2$ is an element of
$\Sigma^*\times \Sigma^*$ defined by $\del(u_1,u_2) = (\ell^{-1}u_1,\ell^{-1}u_2)$.
Intuitively, if a transducer produces
$u_1$ and another one produces $u_2$, then $u_1\wedge u_2$ is what can safely be
output by the two transducers and $\del(u_1,u_2)$ what remains to be produced by
each of them respectively. This notion is naturally extended to tuples of words:
$\del(u_1,\dots, u_n) = (\ell^{-1}u_1,\dots,\ell^{-1}u_n)$ where $\ell =
\bigwedge_{i=1}^n u_i$.

We now introduce notations that are useful when comparing the outputs
of different runs on the same input of a transducer $\tra =
(\aut,\rho, \tau)$ over $\Sigma_\inp,\Sigma_\outp$ with $\aut = (\Sigma_\inp,Q,I,F,\Delta)$. 
Given a pair $(q,w)\in Q\times
\Sigma_\outp^*$, where $w$ is intended to be some delay associated with
state $q$, given a transition $t = (q,\sigma,q')\in\Delta$ and some
output word $u$ prefix of $w\rho(t)$, we denote by $next((q,w),t,u)$ the
``next'' pair (state,delay), assuming that $u$ is output, i.e. 
$next((q,w),t,u) = (q',u^{-1}w\rho(t))$.
More generally, given a (total) function $D : Q \rightarrow 2^{\Sigma_\outp^*}$
associating each state with a set of delays, we let $live(D)  = \{
q\in Q\mid D(q)\neq\varnothing\}$. For $\sigma\in\Sigma_\inp$,
$next(D,\sigma)$ maps every state which can be reached from $\dom(D)$ by reading $\sigma$
to the corresponding delays obtained by outputting the longest common prefix of the
words that can be formed from the previous delays and the output on
these transitions. Formally, we call \emph{safe output of $D$ for $\sigma$} the
word $\ell = \bigwedge \{w\rho(t)\mid q \in live(D),
w \in D(q), t = (q,\sigma,q')\in \Delta\}$. Then
$next(D,\sigma) = \{next((q,w),t,\ell)\mid  q \in live(D),
w \in D(q), t = (q,\sigma,q')\in \Delta\}$.

\subparagraph{The synthesis game}
In the synchronous setting, synthesis problems are classically solved by
reduction to two-player games in which the players alternately choose one input
symbol (the adversary, whom we call Adam) and one output symbol (the
protagonist, called Eve). Their interaction induces a pair of input and output
words by concatenating their respective symbols, and the protagonist wins if
such pair satisfies the specification, or if the input word is out of the
domain. Then, a finite-memory winning strategy in the game corresponds to an
implementation of the specification.

In the asynchronous setting, the protagonist may choose arbitrary
output words at each round instead of a single symbol, and one needs to
introduce output delays in the game in order to define the
winning condition in a regular manner. The game we now present follows
this idea. Given a
transducer $\tra = (\aut, \rho,\tau)$ with $\aut =
(\Sigma_\inp,Q,I,F,\Delta)$, $\rho : \Delta\rightarrow \Sigma_\outp^*$
and $\tau : F\rightarrow \Sigma_\outp^*$, we build a two-player
safety game $G_\tra =
(V_\forall,V_\exists,A_\forall,A_\exists,T_\forall,T_\exists,\textsf{Safe})$,
called the \emph{synthesis game}, whose vertices keep track of the runs in $\tra$
and the associated delays. More precisely, it consists of two disjoint
sets of vertices $V_\forall = 2^Q\times (Q\rightarrow 2^{\Sigma_\outp^*})$  and $V_\exists = V_\forall\times \Sigma_\inp$, respectively
controlled by Adam and Eve. The initial vertex is $v_0 = (I,D_0)\in
V_\forall$ where $D_0(q) = \varnothing$ if $q\not\in I$, and
$D_0(q)=\{\epsilon\}$ otherwise.

Eve's vertices are Adam's vertices extended with the last input symbol picked
by Adam. Suppose now that the game has been played for some rounds and is
currently in some vertex $(C,D)$ of Adam. Along these rounds, Adam has chosen
a sequence $u$ of input symbols, and Eve has chosen a set
of runs
over $u$ from the initial states. $C$ is the set of states in which these runs end.
Each run induces some delays compared to the longest common prefix of all the outputs they can
produce. $D$ maps each state to the delays of the runs ending in it.
Eve's actions consist in selecting some of these runs to
prevent some delays to grow too high, i.e., she can drop from any set
$D(q)$ some of its elements.
By restricting the set of possible runs, Eve
can be in a situation where some state $q$ of $C$ is accepting while
none of the states of $live(D)$ is, in which case she loses,
as none of the runs she has selected accepts the input word chosen
by Adam. Such vertices constitute the set of \emph{unsafe} vertices
she needs to avoid. 

More precisely, the set of Adam's \emph{transitions} $T_\forall$  and
Eve's transitions $T_\exists$ are defined as follows. 
From any game position $(C,D)$, Adam can pick a symbol
$\sigma\in\Sigma_\inp$ and the game evolves to the position
$(C,D,\sigma)$. From $(C,D,\sigma)$, Eve's actions is a subset
$\alpha\subseteq next(D,\sigma)$ (she can ``drop'' some pairs of $next(D,\sigma)$), and the game evolves to 
$(C', D_\alpha)$ where $C'$ is the set of states reached from $C$ by
reading $\sigma$, and $D_\alpha$ maps any $q\in Q$ to the set $\{
w\mid (q,w)\in\alpha\}$. 

Given $K\in\mathbb{N}$, we define the \emph{$K$-synthesis
game} 
as the restriction of $G_\tra$ to delays of length at most $K$:
$G_{\tra,K} = (V_\forall^K,V^K_\exists, v^K_0, A^K_\forall, A^K_\exists,
T^K_\forall, T^K_\exists,\textsf{Safe}^K)$,
where $V_\forall^K = 2^Q\times (Q\rightarrow 2^{\Sigma_\outp^{\leq K}})$,
$A_\exists \subseteq Q\times \Sigma_\outp^{\leq K}$, etc. There
is no deadlock in $G_{\tra,K}$ because Eve can always play
$\varnothing$, at the risk of going to an unsafe position.

\subparagraph{Example}
First, note that by definition of the game, any reachable vertex
$(C,D)$ or $(C,D,\sigma)$ satisfies $live(D)\subseteq C$. Figure~\ref{fig:1delay} represents the $1$-synthesis game for
$\mathcal{D}_1 \cup \mathcal{D}_2$ (\emph{cf} Figure~\ref{fig:ex}). The states
depicted in a vertex correspond to $C$, together with their values by
$D$ (thanks to the previous remark, there is no need to represent the
values $D$ assigns for the states outside $C$). 
The circle vertices are Eve's positions, whose labels are not
depicted, as they are just the label of their predecessor vertex
extended with Adam's action. Bold nodes correspond to the unsafe
states. Let us now explain how the game proceeds in more detail. First, since both $\mathcal{D}_1$ and $\mathcal{D}_2$ are complete and sequential, for each state $(C,D)$ of Adam,
$C$ contains exactly one state of $\mathcal{D}_1$ and one state of $\mathcal{D}_2$.
Eve's actions in the synthesis game, which consist in dropping a subset of pairs (state,delay),
actually correspond here to ``dropping'' one of the two sequential transducers:
at any moment, she can choose to drop $\mathcal{D}_2$, which leads her into the red part of the game,
or to drop $\mathcal{D}_1$, which leads her into the blue part of the game.
Note that once she has dropped one of the transducers, she is stuck in the corresponding part. 

The initial vertex, owned by Adam, corresponds to being in the initial states of both $\mathcal{D}_1$ and $\mathcal{D}_2$, with no delays. 
If Adam chooses to play $a$ as the first input letter, Eve has four choices.
Either she keeps both transducers, with a delay of length $1$, or she
drops one of them, or both (not depicted). 
If Eve chooses to drop $\mathcal{D}_2$, respectively $\mathcal{D}_1$, Adam can then play a $b$, respectively an $a$,
which leads her into an unsafe state.
Note that this proves that Eve cannot win the $0$-synthesis game corresponding to $\mathcal{D}_1 \cup \mathcal{D}_2$.
If Eve keeps both, she has to drop one of them once Adam plays a second letter, since otherwise the delay would grow larger than $1$.
However, in both cases she has a move which ensures her a win: if Adam plays a second $a$, Eve can safely drop $\mathcal{D}_2$
since the accepting state of $\mathcal{D}_1$ has been reached, and if Adam plays $b$,
Eve can safely drop $\mathcal{D}_1$ since the accepting state of
$\mathcal{D}_2$ has been reached. If Adam chooses to play a $b$ in the
first place, Eve can immediately drop $\mathcal{D}_1$ and win. Hence,
Eve wins the $1$-synthesis game associated to $\mathcal{D}_1 \cup
\mathcal{D}_2$. The described strategy then directly induces a sequential transducer
realising the specification.

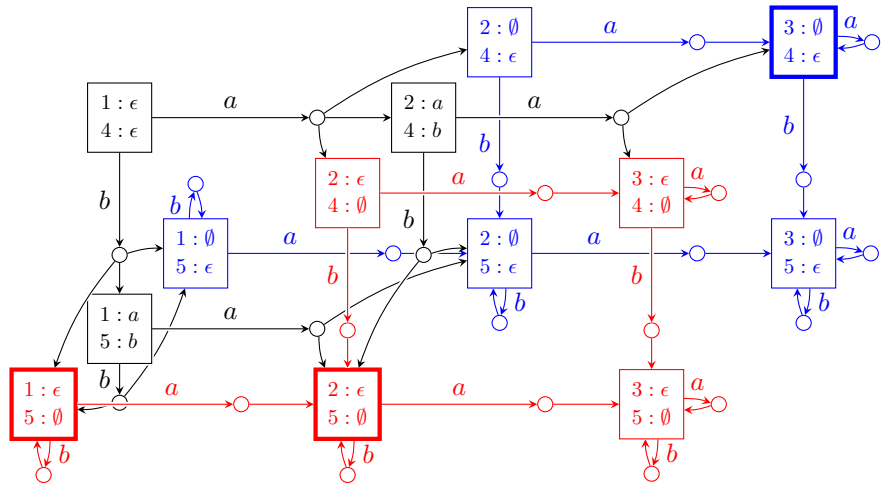
\begin{figure}[ht]
  \centering

\pgfdeclarelayer{T1}
\pgfdeclarelayer{T11}
\pgfdeclarelayer{T2}
\pgfdeclarelayer{T22}
\pgfsetlayers{T2,T22,main,T11,T1}

\begin{tikzpicture}[->, >=stealth,bend angle=10]

\newcommand\aX{2}
\newcommand\bX{0}
\newcommand\aY{1.4}
\newcommand\bY{0}
\newcommand\s{1.6}


\tikzset{
adam/.style    = {draw = white, line width=\s pt, rectangle,minimum width=30pt,minimum height=30pt,scale=.8},
eve/.style     = {draw = white, line width=\s pt, circle,inner sep=2pt},
every path/.style ={-,draw = white, line width=\s pt},
winning/.style = {very thick, font=\boldmath},
unsafe/.style  = {fill = red!70!white}
}

\node[adam]	at(0*\aX + \bX,0*\aY + \bY)		(i)			{$\arraycolsep=1.4pt \begin{array} {lcl}	1 & : & \epsilon	\\	4 & : & \epsilon		\end{array}$};
\node[adam]	at(2*\aX + \bX,0*\aY + \bY)		(a)			{$\arraycolsep=1.4pt \begin{array} {lcl}	2 & : & a		\\	4 & : & b			\end{array}$};
\node[adam]	at(0*\aX + \bX,-2*\aY + \bY)		(b)			{$\arraycolsep=1.4pt \begin{array} {lcl}	1 & : & a		\\	5 & : & b			\end{array}$};

\begin{pgfonlayer}{T22}
\node[eve]		at(1.3*\aX + \bX,0*\aY + \bY)		(i-a)			{};
\node[eve]		at(3.3*\aX + \bX,0*\aY + \bY)		(a-aa)		{};
\node[eve]		at(0*\aX + \bX,-1.3*\aY + \bY)		(i-b)			{};
\node[eve]		at(2*\aX + \bX,-1.3*\aY + \bY)		(a-ab)		{};
\node[eve]		at(1.3*\aX + \bX,-2*\aY + \bY)		(b-ba)		{};
\node[eve]		at(0*\aX + \bX,-2.7*\aY + \bY)		(b-bb)		{};
\end{pgfonlayer}

\path		(i)		edge	[-,draw = white, line width=\s pt]		node[above]	{}	(i-a);
\path		(i-a)		edge[-,draw = white, line width=\s pt]		node[above]	{}	(a);
\path		(a)		edge[-,draw = white, line width=\s pt]		node[above]	{}	(a-aa);
\path		(i)		edge[-,draw = white, line width=\s pt]		node[left]		{}	(i-b);
\path		(i-b)		edge[-,draw = white, line width=\s pt]		node[above]	{}	(b);
\path		(a)		edge[-,draw = white, line width=\s pt]		node[left]		{}	(a-ab);
\path		(b)		edge	[-,draw = white, line width=\s pt]		node[above]	{}	(b-ba);
\path		(b)		edge	[-,draw = white, line width=\s pt]		node[left]		{}	(b-bb);

\path		(i)		edge		node[above]	{}	(i-a);
\path		(i-a)		edge		node[above]	{}		(a);
\path		(a)		edge		node[above]	{}	(a-aa);
\path		(i)		edge		node[left]		{}	(i-b);
\path		(i-b)		edge		node[above]	{}		(b);
\path		(a)		edge		node[left]		{}	(a-ab);
\path		(b)		edge		node[below]	{}	(b-ba);
\path		(b)		edge		node[left]		{}	(b-bb);


\tikzset{
adam/.style    = {draw=black, rectangle,minimum width=30pt,minimum height=30pt,scale=.8,fill = white},
eve/.style     = {draw=black, circle,inner sep=2pt, fill = white},
every path/.style ={black}
}

\node[adam]	at(0*\aX + \bX,0*\aY + \bY)		(i)			{$\arraycolsep=1.4pt \begin{array} {lcl}	1 & : & \epsilon	\\	4 & : & \epsilon		\end{array}$};
\node[adam]	at(2*\aX + \bX,0*\aY + \bY)		(a)			{$\arraycolsep=1.4pt \begin{array} {lcl}	2 & : & a		\\	4 & : & b			\end{array}$};
\node[adam]	at(0*\aX + \bX,-2*\aY + \bY)		(b)			{$\arraycolsep=1.4pt \begin{array} {lcl}	1 & : & a		\\	5 & : & b			\end{array}$};
\node[eve]		at(1.3*\aX + \bX,0*\aY + \bY)		(i-a)			{};
\node[eve]		at(3.3*\aX + \bX,0*\aY + \bY)		(a-aa)		{};
\node[eve]		at(0*\aX + \bX,-1.3*\aY + \bY)		(i-b)			{};
\node[eve]		at(2*\aX + \bX,-1.3*\aY + \bY)		(a-ab)		{};
\node[eve]		at(1.3*\aX + \bX,-2*\aY + \bY)		(b-ba)		{};
\node[eve]		at(0*\aX + \bX,-2.7*\aY + \bY)		(b-bb)		{};

\path		(i)		edge		node[above]	{$a$}	(i-a);
\path		(i-a)		edge		node[above]	{}		(a);
\path		(a)		edge		node[above]	{$a$}	(a-aa);
\path		(i)		edge		node[left]		{$b$}	(i-b);
\path		(i-b)		edge		node[above]	{}		(b);
\path		(a)		edge		node[below left]		{$b$}	(a-ab);
\path		(b)		edge		node[above]	{$a$}	(b-ba);
\path		(b)		edge		node[left]		{$b$}	(b-bb);


\begin{pgfonlayer}{T1}

\tikzset{
adam/.style    = {draw = white, line width=\s pt, rectangle,minimum width=2pt,minimum height=2pt,scale=.8,fill=white},
eve/.style     = {draw = white, line width=\s pt, circle,inner sep=2pt,fill=white},
every path/.style ={-,draw = white, line width=\s pt},
winning/.style = {very thick, font=\boldmath},
unsafe/.style  = {fill = red!70!white}
}

\renewcommand\bX{-1}
\renewcommand\bY{-1}

\node[adam]	at(2*\aX + \bX,0*\aY + \bY)		(a2)			{};
\node[adam]	at(4*\aX + \bX,0*\aY + \bY)		(aa2)			{};
\node[adam]	at(0*\aX + \bX,-2*\aY + \bY)		(b2)			{};
\node[adam]	at(2*\aX + \bX,-2*\aY + \bY)		(ba2)			{};
\node[adam]	at(4*\aX + \bX,-2*\aY + \bY)		(baa2)		{};
\node[eve]		at(3.3*\aX + \bX,0*\aY + \bY)		(a2-aa2)		{};
\node[eve]		at(2*\aX + \bX,-1.3*\aY + \bY)		(a2-ab2)		{};
\node[eve]		at(4*\aX + \bX,-1.3*\aY + \bY)		(aa2-baa2)	{};
\node[eve]		at(1.3*\aX + \bX,-2*\aY + \bY)		(b2-ba2)		{};
\node[eve]		at(3.3*\aX + \bX,-2*\aY + \bY)		(ba2-baa2)	{};

\node[eve, right = 0.35cm of aa2]		(aa2-aaa2)	{};
\node[eve, right = 0.35cm of baa2]		(baa2-baaa2)	{};

\node[eve, below = 0.35cm of b2]		(b2-bb2)	{};
\node[eve, below = 0.35cm of ba2]		(ba2-bba2)	{};
\node[eve, below = 0.35cm of baa2]		(baa2-bbaa2)	{};

\path		(a2)			edge		node[above]	{}	(a2-aa2);
\path		(a2-aa2)		edge		node[above]	{}	(aa2);
\path		(a2)			edge		node[left]		{}	(a2-ab2);
\path		(a2-ab2)		edge		node[above]	{}	(ba2);
\path		(aa2)			edge		node[left]		{}	(aa2-baa2);
\path		(aa2-baa2)	edge		node[above]	{}	(baa2);
\path		(b2)			edge		node[above]	{}	(b2-ba2);
\path		(b2-ba2)		edge		node[above]	{}	(ba2);
\path		(ba2)			edge		node[above]	{}	(ba2-baa2);
\path		(ba2-baa2)	edge		node[above]	{}	(baa2);

\path		(b2)			edge	[bend left]	node[right]	{}	(b2-bb2);
\path		(b2-bb2)		edge	[bend left]	node[left]		{}		(b2);

\path		(ba2)			edge	[bend left]	node[right]	{}	(ba2-bba2);
\path		(ba2-bba2)	edge	[bend left]	node[left]		{}		(ba2);

\path		(baa2)		edge	[bend left]	node[right]	{}	(baa2-bbaa2);
\path		(baa2-bbaa2)	edge	[bend left]	node[left]		{}		(baa2);

\path		(aa2)			edge	[bend left]	node[above]	{}	(aa2-aaa2);
\path		(aa2-aaa2)	edge	[bend left]	node[below]	{}		(aa2);

\path		(baa2)		edge	[bend left]	node[above]	{}	(baa2-baaa2);
\path		(baa2-baaa2)	edge	[bend left]	node[below]	{}		(baa2);

\end{pgfonlayer}


\begin{pgfonlayer}{T1}

\tikzset{color = red,
adam/.style    = {draw=red, rectangle,minimum width=30pt,minimum height=30pt,scale=.8},
eve/.style    = {draw=red, circle,inner sep=2pt},
every path/.style ={red}
}

\renewcommand\bX{-1}
\renewcommand\bY{-1}

\node[adam]	at(2*\aX + \bX,0*\aY + \bY)		(a2)			{$\arraycolsep=1.4pt \begin{array} {lcl}	2 & : & \epsilon	\\	4 & : & \emptyset		\end{array}$};
\node[adam]	at(4*\aX + \bX,0*\aY + \bY)		(aa2)			{$\arraycolsep=1.4pt \begin{array} {lcl}	3 & : & \epsilon \\	4 & : & \emptyset		\end{array}$};
\node[adam,ultra thick]	at(0*\aX + \bX,-2*\aY + \bY)		(b2)			{$\arraycolsep=1.4pt \begin{array} {lcl}	1 & : & \epsilon	\\	5 & : & \emptyset		\end{array}$};
\node[adam,ultra thick]	at(2*\aX + \bX,-2*\aY + \bY)		(ba2)			{$\arraycolsep=1.4pt \begin{array} {lcl}	2 & : & \epsilon	\\	5 & : & \emptyset		\end{array}$};
\node[adam]	at(4*\aX + \bX,-2*\aY + \bY)		(baa2)		{$\arraycolsep=1.4pt \begin{array} {lcl}	3 & : & \epsilon	\\	5 & : & \emptyset		\end{array}$};
\node[eve]		at(3.3*\aX + \bX,0*\aY + \bY)		(a2-aa2)		{};
\node[eve]		at(2*\aX + \bX,-1.3*\aY + \bY)		(a2-ab2)		{};
\node[eve]		at(4*\aX + \bX,-1.3*\aY + \bY)		(aa2-baa2)	{};
\node[eve]		at(1.3*\aX + \bX,-2*\aY + \bY)		(b2-ba2)		{};
\node[eve]		at(3.3*\aX + \bX,-2*\aY + \bY)		(ba2-baa2)	{};

\node[eve, right = 0.35cm of aa2]		(aa2-aaa2)	{};
\node[eve, right = 0.35cm of baa2]		(baa2-baaa2)	{};

\node[eve, below = 0.35cm of b2]		(b2-bb2)	{};
\node[eve, below = 0.35cm of ba2]		(ba2-bba2)	{};
\node[eve, below = 0.35cm of baa2]		(baa2-bbaa2)	{};

\path		(a2)			edge[-,draw = white, line width=\s pt]		node[above]	{}	(a2-aa2);
\path		(a2)			edge[-,draw = white, line width=\s pt]		node[left]		{}	(a2-ab2);
\path		(a2-ab2)		edge[-,draw = white, line width=\s pt]		node[above]	{}	(ba2);
\path		(aa2)			edge[-,draw = white, line width=\s pt]		node[left]		{}	(aa2-baa2);
\path		(b2)			edge[-,draw = white, line width=\s pt]		node[above]	{}	(b2-ba2);

\path		(a2)			edge								node[above]	{$a$}	(a2-aa2);
\path		(a2-aa2)		edge		node[above]		{}		(aa2);
\path		(a2)			edge		node [left]			{$b$}	(a2-ab2);
\path		(a2-ab2)		edge		node[above]		{}		(ba2);
\path		(aa2)			edge		node[left]			{$b$}	(aa2-baa2);
\path		(aa2-baa2)	edge		node[above]		{}		(baa2);
\path		(b2)			edge		node[above right]	{$a$}	(b2-ba2);
\path		(b2-ba2)		edge		node[above]		{}		(ba2);
\path		(ba2)			edge		node[above]		{$a$}	(ba2-baa2);
\path		(ba2-baa2)	edge		node[above]		{}		(baa2);

\path		(b2)			edge	[bend left]	node[right]	{$b$}	(b2-bb2);
\path		(b2-bb2)		edge	[bend left]	node[left]		{}		(b2);

\path		(ba2)			edge	[bend left]	node[right]	{$b$}	(ba2-bba2);
\path		(ba2-bba2)	edge	[bend left]	node[left]		{}		(ba2);

\path		(baa2)		edge	[bend left]	node[right]	{$b$}	(baa2-bbaa2);
\path		(baa2-bbaa2)	edge	[bend left]	node[left]		{}		(baa2);

\path		(aa2)			edge	[bend left]	node[above]	{$a$}	(aa2-aaa2);
\path		(aa2-aaa2)	edge	[bend left]	node[below]	{}		(aa2);

\path		(baa2)		edge	[bend left]	node[above]	{$a$}	(baa2-baaa2);
\path		(baa2-baaa2)	edge	[bend left]	node[below]	{}		(baa2);

\end{pgfonlayer}


\begin{pgfonlayer}{T11}

\tikzset{
every path/.style ={-,draw = white, line width=\s pt,shorten <=5pt}
}

\path		(i-a)			edge[bend right]	node[above]	{}		(a2);
\path		(i-b)			edge[bend right]	node[above]	{}		(b2);
\path		(a-aa)		edge[bend right]	node[above]	{}		(aa2);
\path		(a-ab)		edge[bend right]	node[above]	{}		(ba2);
\path		(b-ba)		edge[bend right]	node[above]	{}		(ba2);
\path		(b-bb)		edge[bend left]		node[above]	{}		(b2);

\tikzset{
every path/.style ={black}
}

\path		(i-a)			edge[bend right]	node[above]	{}		(a2);
\path		(i-b)			edge[bend right]	node[above]	{}		(b2);
\path		(a-aa)		edge[bend right]	node[above]	{}		(aa2);
\path		(a-ab)		edge[bend right]	node[above]	{}		(ba2);
\path		(b-ba)		edge[bend right]	node[above]	{}		(ba2);
\path		(b-bb)		edge[bend left]		node[above]	{}		(b2);

\end{pgfonlayer}


\tikzset{color = blue,
adam/.style    = {draw=blue, rectangle,minimum width=30pt,minimum height=30pt,scale=.8},
eve/.style    = {draw=blue, circle,inner sep=2pt},
every path/.style ={blue}
}

\renewcommand\bX{1}
\renewcommand\bY{1}

\begin{pgfonlayer}{main}

\node[adam]	at(2*\aX + \bX,0*\aY + \bY)		(a3)			{$\arraycolsep=1.4pt \begin{array} {lcl}	2 & : & \emptyset	\\	4 & : & \epsilon		\end{array}$};
\node[adam,ultra thick]	at(4*\aX + \bX,0*\aY + \bY)		(aa3)			{$\arraycolsep=1.4pt \begin{array} {lcl}	3 & : & \emptyset 	\\	4 & : & \epsilon		\end{array}$};
\node[adam]	at(0*\aX + \bX,-2*\aY + \bY)		(b3)			{$\arraycolsep=1.4pt \begin{array} {lcl}	1 & : & \emptyset	\\	5 & : & \epsilon		\end{array}$};
\node[adam]	at(2*\aX + \bX,-2*\aY + \bY)		(ba3)			{$\arraycolsep=1.4pt \begin{array} {lcl}	2 & : & \emptyset	\\	5 & : & \epsilon		\end{array}$};
\node[adam]	at(4*\aX + \bX,-2*\aY + \bY)		(baa3)		{$\arraycolsep=1.4pt \begin{array} {lcl}	3 & : & \emptyset	\\	5 & : & \epsilon		\end{array}$};

\end{pgfonlayer}

\begin{pgfonlayer}{T2}

\node[eve]		at(3.3*\aX + \bX,0*\aY + \bY)		(a3-aa3)		{};
\node[eve]		at(2*\aX + \bX,-1.3*\aY + \bY)		(a3-ab3)		{};
\node[eve]		at(4*\aX + \bX,-1.3*\aY + \bY)		(aa3-baa3)	{};
\node[eve]		at(1.3*\aX + \bX,-2*\aY + \bY)		(b3-ba3)		{};
\node[eve]		at(3.3*\aX + \bX,-2*\aY + \bY)		(ba3-baa3)	{};

\node[eve, right = 0.35cm of aa3]		(aa3-aaa3)	{};
\node[eve, right = 0.35cm of baa3]		(baa3-baaa3)	{};

\node[eve, above = 0.35cm of b3]		(b3-bb3)	{};
\node[eve, below = 0.35cm of ba3]		(ba3-bba3)	{};
\node[eve, below = 0.35cm of baa3]		(baa3-bbaa3)	{};

\path		(a3)			edge		node[above]		{$a$}	(a3-aa3);
\path		(a3-aa3)		edge		node[above]		{}		(aa3);
\path		(a3)			edge		node[below left]			{$b$}	(a3-ab3);
\path		(a3-ab3)		edge		node[above]		{}		(ba3);
\path		(aa3)			edge		node[left]			{$b$}	(aa3-baa3);
\path		(aa3-baa3)	edge		node[above]		{}		(baa3);
\path		(b3)			edge		node[above left]	{$a$}	(b3-ba3);
\path		(b3-ba3)		edge		node[above]		{}		(ba3);
\path		(ba3)			edge		node[above left]	{$a$}	(ba3-baa3);
\path		(ba3-baa3)	edge		node[above]		{}		(baa3);

\path		(b3)			edge	[bend left]	node[left]	{$b$}	(b3-bb3);
\path		(b3-bb3)		edge	[bend left]	node[left]		{}		(b3);

\path		(ba3)			edge	[bend left]	node[right]	{$b$}	(ba3-bba3);
\path		(ba3-bba3)	edge	[bend left]	node[left]		{}		(ba3);

\path		(baa3)		edge	[bend left]	node[right]	{$b$}	(baa3-bbaa3);
\path		(baa3-bbaa3)	edge	[bend left]	node[left]		{}		(baa3);

\path		(aa3)			edge	[bend left]	node[above]	{$a$}	(aa3-aaa3);
\path		(aa3-aaa3)	edge	[bend left]	node[below]	{}		(aa3);

\path		(baa3)		edge	[bend left]	node[above]	{$a$}	(baa3-baaa3);
\path		(baa3-baaa3)	edge	[bend left]	node[below]	{}		(baa3);

\end{pgfonlayer}


\begin{pgfonlayer}{T22}

\tikzset{
every path/.style ={-,draw = white, line width=\s pt}
}

\path		(i-a)			edge	[bend left]		node[above]	{}		(a3);
\path		(i-b)			edge	[bend left]		node[above]	{}		(b3);
\path		(a-aa)		edge	[bend left]		node[above]	{}		(aa3);
\path		(a-ab)		edge	[bend left]		node[above]	{}		(ba3);
\path		(b-ba)		edge	[bend left]		node[above]	{}		(ba3);
\path		(b-bb)		edge	[bend right]	node[above]	{}		(b3);

\tikzset{
every path/.style ={black}
}

\path		(i-a)			edge	[bend left]		node[above]	{}		(a3);
\path		(i-b)			edge	[bend left]		node[above]	{}		(b3);
\path		(a-aa)		edge	[bend left]		node[above]	{}		(aa3);
\path		(a-ab)		edge	[bend left]		node[above]	{}		(ba3);
\path		(b-ba)		edge	[bend left]		node[above]	{}		(ba3);
\path		(b-bb)		edge	[bend right]	node[above]	{}		(b3);

\end{pgfonlayer}

\end{tikzpicture}

\caption{The $1$-synthesis game corresponding to the union of $\mathcal{D}_1$ and $\mathcal{D}_2$ (\emph{cf} Figure~\ref{fig:ex}).}
\label{fig:1delay}

\end{figure}

\begin{proposition}\label{prop:k-delay}
    Let $S$ be a specification defined by some transducer $\tra$.
    If Eve wins the $K$-synthesis game $G_{\tra,K}$ for some $K$, then $S$ is
    realisable by a sequential transducer.
\end{proposition}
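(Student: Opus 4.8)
The plan is to exploit that the $K$-synthesis game is finite and to convert a winning strategy for Eve into the transition structure of a sequential realiser. Since delays are bounded by $K$, the set $Q\rightarrow 2^{\Sigma_\outp^{\leq K}}$ is finite, so $G_{\tra,K}$ has finitely many vertices; as it is a safety game, it is positionally determined for the player who must avoid the unsafe set. From the hypothesis that Eve wins, I would therefore fix a memoryless winning strategy $\lambda$ from $v^K_0$ and read it as a deterministic output device.

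Concretely, I would let the states of the realiser $\mathcal{I}$ be the Adam-vertices $(C,D)$ reachable from $v^K_0=(I,D_0)$ when Eve plays $\lambda$ (a finite set), with initial state $v^K_0$. From $(C,D)$ on input $\sigma$, the transition outputs the safe output $\ell$ of $D$ for $\sigma$ and moves to the unique successor $(C',D_\alpha)$ with $\alpha=\lambda(C,D,\sigma)$; this is deterministic because $\lambda$ is positional and $\ell$ depends only on $(C,D,\sigma)$, so $\mathcal{I}$ is indeed sequential. I would declare $(C,D)$ accepting iff $C\cap F\neq\varnothing$, and set its terminal output to $w\,\tau(q)$ for some fixed pair with $q\in live(D)\cap F$ and $w\in D(q)$. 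Such a pair always exists, since Eve wins and hence $\lambda$ never reaches an unsafe vertex, an unsafe vertex being exactly one with $C\cap F\neq\varnothing$ but $live(D)\cap F=\varnothing$. Moreover $C$ evolves precisely by the subset construction on $\aut$, so on input $u$ we have $C\cap F\neq\varnothing$ iff $u\in L(\aut)=\dom(\tra)$, which yields $\dom(\mathcal{I})=\dom(\tra)$, i.e. condition (1).

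For condition (2), let $\gamma_u$ be the word produced by $\mathcal{I}$ along its unique run on $u$, ending in a state $(C_u,D_u)$. I would prove, by induction on $|u|$, the \emph{delay invariant}: for every state $q$ and every $w\in D_u(q)$ there is a run $p\myxrightarrow{u|\gamma_u w}_\tra q$ with $p\in I$. The base case $u=\epsilon$ is immediate from $D_0$. For the inductive step, $w\in D_{u\sigma}(q')$ means $(q',w)\in\alpha\subseteq next(D_u,\sigma)$, so $w=\ell^{-1}w'\rho(t)$ for some $q\in live(D_u)$, $w'\in D_u(q)$ and transition $t=(q,\sigma,q')$; as $\ell$ is a common prefix of all such $w'\rho(t)$ the word $w$ is well defined, and extending the run given by the induction hypothesis with $t$ (using $\gamma_{u\sigma}=\gamma_u\ell$) produces the desired run on $u\sigma$. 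Now for $u\in\dom(\tra)$ we have $C_u\cap F\neq\varnothing$, the chosen terminal pair $(q,w)$ satisfies $q\in F$, and the invariant gives an initial run ending in $q\in F$ producing $\gamma_u w$; hence $\mathcal{I}(u)=\gamma_u\,w\,\tau(q)$ and $(u,\mathcal{I}(u))\in\inter{\tra}$. Thus $\mathcal{I}$ realises $S$.

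The positional determinacy of the finite safety game and the induction itself are routine; the step that requires care is the bookkeeping linking the word $\gamma_u$ already committed by $\mathcal{I}$ with the delays stored in $D_u$, so that the safe outputs are consumed consistently and each residual $\ell^{-1}w'\rho(t)$ genuinely remains a delay. The one place where the winning hypothesis is truly used is the terminal function: avoidance of unsafe vertices is exactly what guarantees, whenever the input lies in the domain, a surviving accepting run whose output $\mathcal{I}$ can complete into a value consistent with $\tra$.
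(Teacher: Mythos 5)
Your proof is correct and follows essentially the same route as the paper's: extract a positional winning strategy from the finite safety game, take the reachable Adam-vertices as states of the realiser, and label each transition with the safe output of $D$ for $\sigma$. The delay invariant and the treatment of the terminal function that you spell out are exactly the details the paper's sketch leaves implicit, so there is no substantive difference in approach.
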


\begin{proof}[Sketch] If Eve wins the $K$-synthesis game, then, since it is a safety
  game, she can win with a \emph{positional} strategy. Thus, her actions only depend on
  the last visited vertex.
This allows to reconstruct a realiser for $S$, whose states are the possible
vertices of Adam visited by the strategy.
Then, when Adam chooses an input
symbol $\sigma$ in a vertex $(C,D)$ and Eve decides to go to some
vertex $(E,F)$ from $(C,D,\sigma)$, then in the realiser, we add a transition from $(C,D)$ to
$(E,F)$ on $\sigma$, outputting the safe output of $D$ for $\sigma$.
\end{proof}

\subparagraph{Synthesis algorithm} It is worth
noting that the synthesis game allows for a synthesis procedure: for
ascending values of $K$, test whether Eve wins the $K$-synthesis
game (this can be done in \textsf{PTime} in the size of the game). 
If it is the case, then by Proposition~\ref{prop:k-delay} the
specification is realisable, and we can even extract an
implementation corresponding to a winning strategy of Eve. If
it is not the case, then increment $K$ and try again, until $K$
reaches some given upper bound $B$. The $K$-synthesis game is
exponentially large in general (in the transducer defining the
specification, and in $K$). Solving this game efficiently,
using for instance symbolic methods, as done for LTL
synthesis in the synchronous case~\cite{conf/cav/Ehlers10,FiliotSTTT11}, is beyond the scope of this
paper, but is an interesting research direction.

This algorithm is not complete in general: it is shown for instance in
\cite{DBLP:conf/icalp/FiliotJLW16} that some specifications defined by transducers are realisable
by sequential transducers while Eve has no winning strategy in $G_{\tra,K}$
for any $K$. Still, the converse of Proposition~\ref{prop:k-delay} holds for some
subclasses of specifications $\inter{\tra}$. For example, in the synchronous
setting, where we want to synthesise a synchronous
sequential transducer, it suffices to take $K=0$. This gives an
\textsf{ExpTime} procedure to check the realisability of $\inter{\tra}$ by 
a synchronous sequential transducer. If $\tra$ is
\emph{finite-valued}, then by taking $K$ large
enough (triply exponential in $\tra$), we get completeness
\cite{DBLP:conf/icalp/FiliotJLW16}. Finally, if $\tra$ is functional, then Eve wins $G_{\tra,K}$
for some $K$ iff $\tra$ is equivalent to a sequential transducer,
and a polynomial $K$ (in $\tra$) suffices~\cite{BealCPS03}. 

In this paper, we obtain completeness for multi-sequential
specifications by taking $K$ exponential in $\tra$
(Proposition~\ref{prop:boundeddelay}). While this allows us to decide
realisability using the game approach, the time complexity will not be optimal
(\textsf{2ExpTime}). We indeed devise, in Section~\ref{sec:async}, a 
\textsf{PSpace} realisability-checking procedure based on an effective
characterisation of realisable multi-sequential specifications. If the \textsf{PSpace} procedure concludes that the
specification is realisable, one can run the former game-solving procedure to
synthesise a realiser, for ascending values of $K$. This way, one may hope to
synthesise a ``small'' realiser.

\section{The asynchronous setting}\label{sec:async} 

We first characterise recursively the multi-sequential specifications
which are sequentially realisable (Theorem~\ref{thm:charac}). 
Then, we provide an equivalent
characterisation, non-recursive and easier to check algorithmically, but more
technical. 

Similarly to the synchronous case, we define a notion of critical situation to
which a realiser must react. In the synchronous case, it was just a prefix on
which at least two sequential transducers were producing different outputs. In
the asynchronous case, two sequential transducers may produce different outputs
on the same prefix, but this may not be problematic in the case where one is
ahead of the other, i.e., the output of one run is a prefix of the output of the other.
A critical situation is rather a prefix where the
delays between all the outputs of the sequential transducers are too large.
Since no bound is known a priori to define ``too large'', we formalise a
critical situation as a prefix of the form $uv$, such that at least two
sequential transducers loop on $v$, and have a different delay before and after
the loop. By iterating this loop, i.e. by taking a prefix $uv^n$, the delay
between these two transducers will grow unboundedly when $n$ increases. For such
loops, the situation will get critical if a realiser does not react.

\begin{definition}[critical loop]\label{def:critloop}
Let $\tra = \biguplus_{i=1}^n \mathcal{D}_i$ be an $n$-sequential
transducer. A \emph{critical loop} for $\tra$ is
a triple $(u,v,\mathcal{X})\in \Sigma_\inp^*\times
\Sigma_\inp^*\times 2^{\{1,\dots,n\}}$ such that
\begin{enumerate}
\item\label{init}
for all $i\in\mathcal{X}$, there exists an initial run $p_i\xrightarrow{u|\alpha_i} q_i\xrightarrow{v|\beta_i}q_i$
of $\mathcal{D}_i$ on $uv$;
\item\label{void}
for all $i\in\{1,\dots,n\}\setminus\mathcal{X}$,
there is no run of $\mathcal{D}_i$ on $u$;
\item\label{diff}
There exists $i,j\in\mathcal{X}$ such that
$\delay(\alpha_i,\alpha_j) \neq \delay(\alpha_i\beta_i,\alpha_j\beta_j)$.
\end{enumerate}
\end{definition}

Our characterisation echoes the one of the synchronous
setting. It says that whenever there is a critical situation (a
critical loop), a realiser must be able to drop some of the sequential
transducers, in order to prevent the delays to grow unboundedly, while
preserving the residual domain. Formally:

\begin{theorem}[recursive characterisation]\label{thm:charac}
    Let $\tra = \biguplus_{i=1}^n\mathcal{D}_i$ be a multi-sequential transducer over
    $\Sigma_\inp,\Sigma_\outp$. Then $\inter{\tra}$ is realisable by a
    sequential transducer, iff, for all critical loops $(u,v,\mathcal{X})$, there
    exists $\mathcal{Y}\subsetneq \mathcal{X}$ such that 
    \begin{enumerate}
      \item $\forall i,j\in \mathcal{Y}$, $\delay(\alpha_i,\alpha_j) =
        \delay(\alpha_i\beta_i,\alpha_j\beta_j)$ (following the notations of Definition~\ref{def:critloop}),
      \item $u^{-1}\dom(\tra) = \bigcup_{i\in \mathcal{Y}}
        u^{-1}\dom(\mathcal{D}_i)$,
      \item $\bigcup_{i\in \mathcal{Y}} (u,\ell)^{-1}\inter{\mathcal{D}_i}$
        is realisable by a sequential transducer, where
        $\ell = \bigwedge_{i\in\mathcal{X}} \alpha_i$.  
    \end{enumerate}
\end{theorem}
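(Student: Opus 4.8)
```latex
\textbf{Proof plan.}
The plan is to prove both directions via the $K$-synthesis game, mirroring the structure of the synchronous characterisation (Theorem~\ref{thm:characsync}) but replacing ``disagreement on a prefix'' by ``unbounded delay growth along a loop''. For the forward direction, suppose $\inter{\tra}$ is realised by some sequential transducer $\mathcal{U}$, and fix a critical loop $(u,v,\mathcal{X})$. The key observation is that $\mathcal{U}$, being sequential, commits to a single output after reading each input prefix, so the delays between $\mathcal{U}$'s output and each $\mathcal{D}_i$'s output must stay bounded along the whole domain; in particular they cannot grow when pumping $uv^n$. I would let $\mathcal{Y}$ be the set of $i\in\mathcal{X}$ whose delay relative to $\mathcal{U}$ stabilises (equivalently, whose run delay does not change across the loop). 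First I would argue $\mathcal{Y}\subsetneq\mathcal{X}$: by condition~\ref{diff} of Definition~\ref{def:critloop}, at least one pair in $\mathcal{X}$ changes relative delay across the loop, so at least one index of $\mathcal{X}$ must be excluded from $\mathcal{Y}$. Condition~1 of the theorem then follows because two transducers that each keep a bounded delay relative to the same $\mathcal{U}$ also keep a bounded (hence, along a loop, constant) delay relative to each other. Condition~2 holds because $\mathcal{U}$ realises $\inter{\tra}$, so $\dom(\mathcal{U})=\dom(\tra)$, and after reading $u$ the realiser $\mathcal{U}$ must still accept every word of $u^{-1}\dom(\tra)$ using only the retained transducers. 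Condition~3 follows because $\mathcal{U}$ itself, suitably shifted by the common output $\ell$, realises the residual specification restricted to $\mathcal{Y}$.

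For the converse, I would build a sequential realiser using the synthesis game of Section~\ref{sec:delay}, appealing to Proposition~\ref{prop:k-delay}: it suffices to show that the hypothesis guarantees Eve a winning strategy in $G_{\tra,K}$ for some finite $K$. The strategy is to keep simulating all live runs in parallel, outputting the safe common prefix, and to drop transducers exactly when a critical situation threatens to let a delay exceed the bound $K$. Whenever two retained runs start diverging along some input segment that forms a loop, the hypothesis supplies a set $\mathcal{Y}\subsetneq\mathcal{X}$ whose members keep a constant mutual delay and still cover the residual domain (conditions~1 and~2); Eve drops the indices outside $\mathcal{Y}$, which is safe precisely because condition~2 ensures the residual domain remains covered, so she never reaches an unsafe vertex. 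The recursion in condition~3 is then handled by iterating the same argument on the strictly smaller index set, which terminates since $|\mathcal{Y}|<|\mathcal{X}|\le n$.

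The main obstacle, and the step requiring the most care, is quantifying the bound $K$ and justifying that keeping the delays bounded is \emph{always possible} under the stated hypothesis. The subtlety is that the theorem's condition is stated only for critical loops, whereas a run of the game produces arbitrary (non-looping) input prefixes; I would need a pumping argument to show that if a delay grows beyond a threshold exponential in $\tra$, then it does so across a genuine loop, so some critical loop $(u,v,\mathcal{X})$ is witnessed and the hypothesis becomes applicable. Concretely, I would bound the delay that can accumulate before two runs must revisit a common pair of states, and argue that any unbounded delay growth factors through such a critical loop; this is where the choice $K$ exponential in $\tra$ (cf.\ Proposition~\ref{prop:boundeddelay}) enters. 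Conversely, to establish that no strategy can keep delays bounded when the hypothesis \emph{fails}, I would exhibit, from a critical loop admitting no good $\mathcal{Y}$, an Adam strategy forcing either an unbounded delay (by pumping the loop) or an uncovered residual domain, hence a loss for Eve; the delicate point is combining the failure of conditions~1, 2 and~3 into a single Adam strategy, which requires interleaving the pumping of the loop with the recursive adversarial play guaranteed by the failure of condition~3.
```
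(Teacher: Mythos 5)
Your proposal is correct and takes essentially the same route as the paper's proof: the forward direction chooses $\mathcal{Y}$ as the transducers whose delay relative to the sequential realiser $\mathcal{U}$ stays bounded along $uv^*$, and the converse builds Eve's winning strategy in the $K$-synthesis game by dropping to $\mathcal{Y}$ when a critical loop is met and recursing on the strictly smaller union (the paper phrases this as induction on $n$, re-invoking the already-proven forward implication so that the residual specification again satisfies the critical-loop conditions), together with the same key pumping argument that any delay exceeding an exponential bound must factor through a genuine critical loop. One caveat: your closing step --- constructing an Adam strategy when the hypothesis fails --- is superfluous, since the forward direction already covers that case directly, and it could not serve as a substitute for it anyway: Eve losing the game does not imply unrealisability, because the game characterisation is incomplete in general (cf.\ Section~\ref{sec:delay}).
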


\begin{proof}[Sketch]
    $\Rightarrow$ Let $\mathcal{U}$ be a sequential transducer
    realising $\inter{\tra}$. For every critical loop $(u,v,\mathcal{X})$ of $\tra$,
    the corresponding set $\mathcal{Y}$ is obtained
    by getting rid of all the transducers that stray arbitrarily far from
    $\mathcal{U}$ on the input words of the form $uv^*$.
    Then, the first property is immediate, and the other two follow from
    the fact that $\mathcal{U}$ is sequential and realises $\inter{\tra}$. 

    $\Leftarrow$ Conversely, assuming that whenever a critical loop is met there exists
    a set $\mathcal{Y}$ satisfying the three conditions, we prove by induction on
    the degree $n$ of sequentiality of $\inter{\tra}$ that Eve has
    a winning strategy in the $K_\tra$-synthesis game for $\tra$, for some
    well-chosen value $K_\tra$ depending on $\tra$. By Proposition~\ref{prop:k-delay}, this entails the existence of a
    sequential realiser.
    
    Note that in the synthesis game, since $\tra$ is a union of sequential
    transducers, for each accessible vertex $(C,D)$ of Adam, and for every
    $i\in\{1,\dots,n\}$, there is at most one state $q_i$ of
    $\mathcal{D}_i$ occurring in $live(D)$, and if there exists such a state,
    $|D(q_i)| = 1$. As a consequence, Eve's
    actions in the synthesis game, which consist in dropping a subset of
    pairs (state,delay),  actually correspond here to ``dropping'' a
    subset of sequential transducers.
    
    If $n=1$, then $\tra$ is sequential, and the strategy of Eve that consists in never dropping $\tra$
    is winning. Now, suppose that $n>1$. 
    In order to demonstrate that Eve has a winning strategy, we show that for every input word chosen by
    Adam, either Eve can keep track of all the
    transducers in the $K_{\tra}$-synthesis game, which ensures her a win, or she can drop some
    transducers on the way, while reaching a state from which she has a winning strategy.
    
    Let $u \in \Sigma_{\inp}^*$, and let $(C_0,D_0)$ be the state reached by Eve on input $u$ if she drops nothing.
    If $(C_0,D_0)$ is not part of the $K_{\tra}$-synthesis game, i.e., for some $q \in C_0$, $D_0(q) = \{ w \}$ with $|w| > K_{\tra}$,
    this implies the existence of a decomposition $u_1u_2u_3$ of $u$ such that
    $(u_1,u_2,\mathcal{X})$ is a critical loop for some $\mathcal{X} \subseteq \{1, \ldots, n\}$.
    Then, by hypothesis, there exists a
    subset $\mathcal{Y}\subsetneq \mathcal{X}$ which satisfies the three
    conditions of the theorem, hence $\tra' = \biguplus_{i\in\mathcal{Y}}(u_1,\ell)^{-1}\inter{\mathcal{D}_i}$ is realisable
    by a sequential transducer.
    In particular, $\tra'$ satisfies the 
    conditions on critical loops (implication $\Rightarrow$ shown before), and, by the induction hypothesis (since $\tra'$
    is $|\mathcal{Y}|$-sequential and $|\mathcal{Y}|<n$), Eve has a
    winning strategy in the $K_{\tra'}$-synthesis game for $\tra'$ from
    the initial vertex. Lifting this strategy to the $K_\tra$-synthesis game for $\tra$
    yields a winning strategy for Eve from the state $(C,D')$, where $(C,D)$ is the state reached by Eve
    on input $u_1u_2$ if she drops nothing, and $D'$ is obtained from $D$ by dropping
    all the transducers that are not part of $\mathcal{Y}$.
\end{proof}

The proof of Theorem~\ref{thm:charac} shows that if
a multi-sequential specification is realisable, Eve wins the $K$-synthesis game for $K$ computable from the specification, as stated in Proposition~\ref{prop:boundeddelay}.
As explained in Section~\ref{sec:delay}, solving the $k$-synthesis game for
ascending values of $k$ then provides a practical way to synthesise a realiser,
but the complexity is not optimal. 

\begin{proposition}[bounded delay]\label{prop:boundeddelay}
    Let $S$ be a specification defined by an $n$-sequential
    transducer $\tra$. Then $S$ is
    realisable by some sequential transducer iff Eve wins the
    $K$-synthesis game for $K = L(6M)^{n^2}$, where $L$ is the longest
    output occurring on a transition of $\tra$, and $M$ is the maximal
   number of states of a sequential transducer of $\tra$.  
\end{proposition}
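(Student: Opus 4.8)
The statement is an equivalence, and the implication ``Eve wins $G_{\tra,K}\Rightarrow S$ is realisable'' holds for this particular $K$ as an instance of Proposition~\ref{prop:k-delay}. Hence the whole content lies in the converse, ``$S$ realisable $\Rightarrow$ Eve wins $G_{\tra,K}$ for $K=L(6M)^{n^2}$''. By Theorem~\ref{thm:charac}, realisability is equivalent to the recursive condition on critical loops, and the $\Leftarrow$ direction of that theorem already builds, by induction on the degree of sequentiality, a winning strategy for Eve in the $K_\tra$-synthesis game for \emph{some} $K_\tra$. The plan is therefore to revisit that induction and make $K_\tra$ explicit, by quantifying at each level (i) how large a delay Eve may incur while keeping all currently alive transducers before a critical loop is necessarily exposed, and (ii) how this budget compounds when she drops to a strict subset $\mathcal{Y}$ and recurses on the residual $\tra'=\biguplus_{i\in\mathcal{Y}}(u_1,\ell)^{-1}\inter{\mathcal{D}_i}$.

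The heart of the argument is a pumping lemma of the following shape: if, along some input $u$ with alive set $\mathcal{X}$, the tuple delay between the (at most $n$) initial runs exceeds a threshold $\theta(n,M,L)$, then $u$ factorises as $u_1u_2u_3$ with $(u_1,u_2,\mathcal{X}')$ a critical loop, where $\mathcal{X}'$ is the alive set after $u_1$. To obtain it I would follow the run of the synchronised product of the alive sequential transducers: each has at most $M$ states, so the product has at most $M^{n}$ states, and the alive set can strictly shrink at most $n$ times, cutting $u$ into at most $n$ blocks of constant alive set. Inside such a block, a repetition of a product state together with a change of a pairwise delay is exactly an instance of the third condition of Definition~\ref{def:critloop}, hence a critical loop; conversely, as long as no critical loop occurs, a state repetition cannot change the delay (reminiscent of the twinning property), so along a state-simple path the delay grows by at most $L$ per step. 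This bounds the delay accumulated per block, and therefore $\theta(n,M,L)$, by an explicit function of $M$, $L$ and $n$.

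It then remains to propagate this through the recursion. When Eve meets a critical loop she drops to $\mathcal{Y}\subsetneq\mathcal{X}$ of size at most $n-1$ and switches to the strategy given, by induction, for $\tra'$; the residual construction endows $\tra'$ with initial pending outputs, namely the delays $\ell^{-1}\alpha_i$ for $i\in\mathcal{Y}$, each bounded by the delay already accumulated, i.e.\ by $\theta(n,M,L)$. Writing $B(n)$ for the delay bound needed for an $n$-sequential specification started from such initial delays, the analysis yields a recurrence of the form $B(n)\le \theta(n,M,L)+B(n-1)$, with $B(1)$ bounded by the initial pending output (for a single sequential transducer Eve never drops). Solving it gives a bound that is over-approximated by $L(6M)^{n^2}$, the claimed $K$. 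Lifting, level by level, the winning strategies from the residual games back to the top-level game $G_{\tra,K}$, exactly as in the proof of Theorem~\ref{thm:charac}, then produces a single positional winning strategy for Eve.

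The step I expect to be the main obstacle is the pumping lemma: one must simultaneously find a factor on which \emph{all} alive transducers loop, not merely the two whose relative delay witnesses criticality, and ensure that the recomputation of the longest common prefix $\ell$ at a block boundary, where a transducer dies, does not reintroduce delay growth. Making the constants line up with $L(6M)^{n^2}$, in particular controlling the blow-up of the initial pending outputs carried into each recursive call, is the delicate bookkeeping, whereas the strategy-lifting step is routine once the explicit bound is in hand.
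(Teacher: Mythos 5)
Your proposal takes essentially the same route as the paper's: the $\Leftarrow$ direction is an instance of Proposition~\ref{prop:k-delay}, and the $\Rightarrow$ direction re-runs the induction underlying Theorem~\ref{thm:charac} with explicit constants, using a product-automaton pumping argument (over blocks of constant alive set, in the style of the twinning property) to show that a delay exceeding the threshold forces a completed critical loop, then dropping to $\mathcal{Y}$ and lifting the residual strategy. The only soft spot is your recurrence $B(n)\le\theta(n,M,L)+B(n-1)$, which is valid only under the refined induction hypothesis in which initial pending delays enter additively (if instead one folds those delays into the residual transducer's transition outputs and invokes the theorem as a black box, the recurrence becomes multiplicative, roughly $B(n)\lesssim(\theta(n,M,L)+L)\cdot B(n-1)/L$), but both readings solve to a value dominated by $L(6M)^{n^2}$, so the claimed $K$ is obtained either way.
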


\begin{theorem}\label{thm:complexity_synthesis_asynchronous}
    A realisable specification $S$ defined by a multi-sequential transducer $\tra$ with $m$ states
    admits a realiser of size doubly exponential in $m$.
    Moreover, there exists a family $(S_n)_{n \in \mathbb{N}}$ of realisable specifications such that
    for every $n \in \mathbb{N}$, $S_n$ is definable by a multi-sequential transducer of size
    polynomial in $n$, and every sequential transducer realising $S_n$ has a size that is
    doubly exponential in $n$.
\end{theorem}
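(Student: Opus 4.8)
\emph{Upper bound.} The plan is to read a realiser off a winning strategy in a synthesis game and to count its states carefully. By Proposition~\ref{prop:boundeddelay}, since $S$ is realisable Eve wins the $K$-synthesis game $G_{\tra,K}$ for $K = L(6M)^{n^2}$, where $n$ is the degree of $\tra$ and $L,M\le m$; hence $K = 2^{\mathrm{poly}(m)}$ is singly exponential in $m$. By Proposition~\ref{prop:k-delay}, and since $G_{\tra,K}$ is a safety game won with a positional strategy, the realiser can be taken to have as states the reachable vertices of Adam in $G_{\tra,K}$ visited by the strategy. The crux is the count: although the full vertex set $V_\forall^K = 2^Q\times(Q\rightarrow 2^{\Sigma_\outp^{\le K}})$ is triply exponential in $m$, I would invoke the structural observation already used in the proof of Theorem~\ref{thm:charac}: as $\tra$ is a disjoint union of $n$ deterministic transducers $\mathcal{D}_i$, every \emph{reachable} Adam vertex $(C,D)$ has, for each $i$, at most one live state of $\mathcal{D}_i$, carrying a single delay of length at most $K$. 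Thus a reachable vertex amounts to choosing, for each of the $n$ transducers, one of its $\le M$ states together with a delay in $\Sigma_\outp^{\le K}$ (or declaring it dead or dropped), bounding the number of reachable vertices by $\bigl(2M\,|\Sigma_\outp|^{K+1}\bigr)^n$. Plugging $K = 2^{\mathrm{poly}(m)}$ gives $|\Sigma_\outp|^{n(K+1)} = 2^{2^{\mathrm{poly}(m)}}$, which dominates the singly exponential factor $M^n$, so the total is doubly exponential in $m$, as required.

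\emph{Lower bound.} For the matching lower bound I would exhibit a family $(S_n)$ forcing any realiser to store $2^n$ freely chosen bits. Work over $\Sigma_\inp = \{0,1,\#,\$,A,B\}$ and $\Sigma_\outp = \{0,1\}$, and call a \emph{cell} a block $\#\,c\,b$ with $c\in\{0,1\}^n$ an $n$-bit counter and $b\in\{0,1\}$ a data bit. A cell word ending in $\$A$ or $\$B$ is \emph{valid} if its counters run through $0^n,0^{n-1}1,\dots,1^n$ in order, so it has exactly $2^n$ cells. Set $S_n = \inter{T_A}\cup\inter{T_B}\cup\bigcup_* \inter{E_*}$, where $T_A$ (resp.\ $T_B$) is sequential of size $\mathrm{poly}(n)$, has domain all cell words ending in $\$A$ (resp.\ $\$B$), and outputs $0\cdot b_1\cdots b_k$ (resp.\ $1\cdot b_1\cdots b_k$), i.e.\ the tag bit followed by all data bits, and the $E_*$ are sequential transducers, each outputting $\epsilon$, whose domains cover exactly the \emph{invalid} cell words. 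The key point making this multi-sequential of polynomial size is that invalidity is a \emph{union} of polynomially many polynomial conditions: for each bit position $j$, a $\mathrm{poly}(n)$-size deterministic automaton checks that bit $j$ of the counter increments correctly across every adjacent pair of cells by tracking only the running carry, so ``bit $j$ is wrong somewhere'' is $\mathrm{poly}(n)$-recognisable; adding a constant number of further polynomial automata to forbid a wrong first counter, a wrong terminating counter, and any non-final occurrence of $1^n$ (preventing wraparound), the complement of validity is recognised by $O(n)$ polynomial deterministic automata, each serving as the domain of one $E_*$.

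I would then verify the two halves. For realisability: the only inputs on which $\epsilon$ is \emph{not} an allowed output are the valid ones, and these have bounded length $2^n$, while every input whose length differs from $2^n$ cells is invalid and thus admits the easy output $\epsilon$. A realiser therefore outputs $\epsilon$ throughout any cell word (tracking validity by a $\mathrm{poly}(n)$ automaton and remembering the $\le 2^n$ data bits seen so far), and at the final tag either dumps $\tau'\cdot b_1\cdots b_{2^n}$ if the input was valid or outputs nothing otherwise; this uses $2^{2^n}\cdot\mathrm{poly}(n)$ states, so $S_n$ is realisable. For the lower bound, consider the $2^{2^n}$ valid cell words $w_x$, one per data vector $x\in\{0,1\}^{2^n}$, and the prefixes $w_x\$$. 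Since $T_A$ and $T_B$ force outputs $0x$ and $1x$ whose longest common prefix is empty, any realiser has produced $\epsilon$ after $w_x\$$; as the continuation $A$ (resp.\ $B$) then requires it to output $0x$ (resp.\ $1x$), its state after $w_x\$$ must determine $x$. Hence distinct vectors $x$ yield distinct states and any realiser has at least $2^{2^n}$ states, which (as $m=\mathrm{poly}(n)$) matches the upper bound.

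The main obstacle is the lower-bound construction, and within it precisely the enforcement of the counter: a single deterministic transducer recognising ``valid'' would need exponentially many states, so one must instead express \emph{invalidity} as a polynomial union of polynomial deterministic conditions (the per-bit carry checkers), and then verify that this union, together with the unconstrained $T_A,T_B$, keeps the specification realisable while leaving the valid, storage-forcing inputs without any cheap alternative output.
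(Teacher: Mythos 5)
Your upper bound is essentially the paper's own argument: invoke Proposition~\ref{prop:boundeddelay} to get a singly exponential bound $K$, then count the reachable Adam vertices of the $K$-synthesis game using the observation (already made in the proof of Theorem~\ref{thm:charac}) that, for a disjoint union of sequential transducers, every reachable vertex carries at most one state and one singleton delay per component; the paper's count $2^{m}\bigl(m\,|\Gamma|^{K+1}\bigr)^{n}$ matches yours up to inessential constants.

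Your lower bound, however, takes a genuinely different route. The paper defines $S_n$ over $\prodalp{n}=\{1,\dots,n\}\times\{a,b\}$ using the function $\swap$ that moves the last letter of the second projection to the front, and permits the output $\epsilon$ exactly on words whose first projection contains a \emph{bad $j$-pair}; it then cites Klein and Zimmermann for the two combinatorial facts that every word of length at least $2^n$ contains a bad $j$-pair while some word $\nojp{n}$ of length $2^n-1$ contains none. Since each bad-$j$-pair language has a $3$-state DFA, the specification is $(n+2)$-sequential with $3(n+2)$ states, and the words $\nojp{v}$ (first projection $\nojp{n}$, second projection an arbitrary $v\in\{a,b\}^{2^n-1}$) force any realiser to have $2^{2^n-1}$ distinct states. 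You replace this gadget by an $n$-bit counter: validity (counters running from $0^n$ to $1^n$) bounds the storage-forcing inputs to $2^n$ cells, and invalidity decomposes into a union of $O(n)$ polynomial-size deterministic conditions (per-bit carry checkers plus wrong first/last counter and premature $1^n$). Both constructions exploit the same mechanism — the required output begins with information available only at the end of the input ($\swap$'s last letter, respectively your $A$-versus-$B$ tag, which is why $T_A,T_B$, like the two branches of $\swap$, can emit their tag immediately: their domains force the terminator), while the $\epsilon$-escape is available precisely on the non-hard inputs — and your distinguishing argument (empty common prefix of $0x$ and $1x$ after $w_x\$$, then the state determines $x$) mirrors the paper's argument that the run on $\nojp{v}$ must produce $\epsilon$ so that the terminal function equals $\swap(\nojp{v})$. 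The paper's version buys brevity by citing the $j$-pair lemma; yours is self-contained but shifts the burden onto verifying the carry-checker automata and the well-formedness bookkeeping, which your plan correctly identifies as the main thing to nail down. Both are valid proofs of the stated bounds.
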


\begin{proof}
Let $S \subset \Sigma^* \times \Gamma^*$ be a realisable specification defined by an $n$-sequential transducer $\tra$ with
a set of states $Q$ of size $m$.
Note that $n \leq m$, hence, by Proposition \ref{prop:boundeddelay},
Eve wins the $K$-synthesis game for some $K$ exponential in $m$.
Then, the construction presented in the proof of Proposition \ref{prop:k-delay}
exposes a realiser whose set of states $Q'$ consists of Adam's vertices that are reachable in the $K$-synthesis game.
For every such vertex $(C,D) \in 2^Q \times (Q \rightarrow 2^{\Gamma^*})$, since $\mathcal{T}$ is $n$-sequential, 
there is at most $n$ sates $q \in Q$ satisfying $D(q) \neq \emptyset$.
Moreover, for every such state we have $D(q) = \{ w \}$ for some $w \in \Gamma^*$ satisfying $|w| \leq K$.
Therefore, the size of $Q'$ is bounded by
$2^{m}(m(|\Gamma|^{K+1}))^{n}$, which is doubly exponential in $m$.

In order to expose the family $(S_n)_{n \in \mathbb{N}}$,
we use the notion of $j$-pairs, presented in \cite{DBLP:journals/corr/KleinZ14}.
For every $n \in \mathbb{N}$, let us consider the alphabet $\alphint{n} = \{1, \ldots, n\}$.
A \emph{bad $j$-pair} of a word $u = i_1 \ldots i_m \in \alphint{n}^*$ is a pair of positions $1 \leq k < k' \leq m$ such that
$i_k = i_{k'} = j$, and for all $k < \ell < k'$, $i_\ell \leq j$.
Then every $u \in \alphint{n}^*$ satisfying $|u| \geq 2^n$ admits a bad $j$-pair for some $1 \leq j \leq n$,
and there exists a word, denoted by $\nojp{n}$, that has size $2^n-1$, and contains no $j$-pair (see \cite{DBLP:journals/corr/KleinZ14}).
We now consider the finite alphabet $\Sigma = \{a,b\}$.
For every $n \in \mathbb{N}$, let $\prodalp{n}$ denote the alphabet $\alphint{n} \times \Sigma$.
We denote by $\pi_1 : \prodalp{n}^* \rightarrow \alphint{n}^*$ and  $\pi_2 : \prodalp{n}^* \rightarrow \Sigma^*$
the projections on the first, respectively second component. 
Let $\swap : \prodalp{n}^* \rightarrow \Sigma^*$ be the function mapping $w \in \prodalp{n}^*$ to the word obtained
by taking the last letter of $\pi_2(w)$ and putting it at the beginning, i.e., $\swap(w) = \sigma v$ where $\sigma \in \Sigma$ and $v \in \Sigma^*$ satisfy $\pi_2(w) = v \sigma$.
We consider the specification 
\[
S_n = \{(w,\swap(w))| w \in \prodalp{n}^* \}  \cup \{ (w, \epsilon) | w \in \prodalp{n}^* \textup{ contains a bad $j$-pair for some $1 \leq j \leq n$} \}.
\]
Then $S_n$ is definable by an $(n+2)$-sequential transducers with $3(n+2)$ states, since
the function $\swap$ is definable by the union of $2$ sequential transducers of size $3$,
and for every $1 \leq j \leq n$, the set of words $w \in \alphint{n}^*$ containing a bad $j$-pair is recognisable by a deterministic automaton of size $3$.
Moreover, since every word $u \in \alphint{n}^*$ of size greater than $2^n$ admits a bad $j$-pair for some $j$,
$S_n$ is realised by the sequential transducer mapping every word $w \in \prodalp{n}^*$ satisfying $|w| < 2^{n}$
to $\swap(w)$, and every $w \in \prodalp{n}^*$ satisfying $|w| \geq 2^{n}$ to $\epsilon$.

We now show that every sequential transducer $\mathcal{D}$ realising $S_n$ has at least $2^{2^{n}-1}$ states.
Let $\mathcal{D} = ((\Sigma,Q,I,F,\Delta),\rho,\tau)$ be a sequential transducer realising $S_n$.
For every $v \in \Sigma^*$ such that $|v| = 2^n-1$, let $\produ{v} \in \prodalp{n}^*$ denote the word satisfying $\pi_1(\produ{v}) = \nojp{n}$ and $\pi_2(\produ{v}) = v$.
We now show that for every pair of distinct words $v_1,v_2 \in \Sigma^*$ of size $2^{n}-1$,
the states reached by $\mathcal{D}$ on input $\produ{v_1}$ and $\produ{v_2}$ are distinct.
This allows us to conclude the proof, since $\Sigma^*$ contains $2^{2^{n}-1}$ such words.
Given $v \in \Sigma^*$ satisfying $|v| = 2^{n}-1$, let $\rho_v: p_0 \yrightarrow{ \produ{v} | v'}[-1pt] p_{v}$ denote the accepting run of $\mathcal{D}$ on input $\produ{v}$.
Then $v' = \epsilon$, since if the first letter of $v'$ was an $a$, $\mathcal{D}$ would not be able to produce an acceptable output
on input $\produ{v} \cdot (1,b)$, and a similar contradiction would be reached if the first letter of $v'$ was a $b$.
Therefore, the output associated to $\produ{v}$ is produced by the terminal function of $\mathcal{D}$,
i.e., $\tau(p_{v}) = \swap(\produ{v})$.
Since $\swap$ is injective, for every pair of distinct words $v_1,v_2 \in \Sigma^*$ of size $2^n-1$, $p_{v_1} \neq p_{v_2}$.
\end{proof}



We are now ready to show how to decide the realisability of multi-sequential
specifications in \textsf{PSpace}. Consider the characterisation
given in Theorem~\ref{thm:charac}. We rely on the notion of
witness for the non-satisfaction of this characterisation, and we show how to decide the existence of a witness, using a reduction to
the emptiness of reversal-bounded counter machines. 

The notion of witness intuitively consists in the following ingredients: (1) an 
unfolding (modeled as a tree) of the recursive characterisation of
Theorem~\ref{thm:charac} and (2) an explicit formulation of delay differences
using simple properties of words. Formally, given an $n$-sequential transducer
$\tra = \biguplus_{i=1}^n \mathcal{D}_i$, where each $\mathcal{D}_i$ is sequential, a
\emph{witness} for $\tra$ is a finite tree $t$ whose nodes are labelled in
$\Sigma_\inp^*\times \Sigma_\inp^*\times (2^{\{1,\dots,n\}}\setminus
\{\varnothing\})$. For any node $x$ of $t$, we denote by $(u_x,v_x,S_x)$
its label. For all nodes
$x,y,z$ of $t$, it is required that: 
\begin{enumerate}
\item (maximality) if $x$ is the root, $S_x = \{1,\dots,n\}$;
\item (consistency) $S_x$ can be split into two disjoint sets
  $N_x,L_x$ such that for all $i\in N_x$  there
  is no run of $\mathcal{D}_i$ on $u_x$, and for all $i\in L_x$ 
  there is a run of 
  $\mathcal{D}_i$ on $u_xv_x$ from its initial state $q_0^i$, of the
  form $q_0^i\yrightarrow{ u_{x}| \alpha_{x,i}}[-1pt] p_{x,i}
  \yrightarrow{v_{x}|\beta_{x,i}}[-1pt] p_{x,i}$;
\item (monotonicity) if $y$ is a child of $x$, then
  $S_y\subsetneq L_x$ and $u_x$ is a prefix of $u_y$;
\item (partition) if $Y$ is the set of children of $x$, then $\{ S_y\mid y\in Y\}$ partitions $L_x$;
\item (delays) if $y$ and $z$ are different children of $x$,
  for all $i\in S_y$ and $j\in
  S_z$,  either $|\beta_{x,i}|\neq |\beta_{x,j}|$ or, $\beta_{x,i}\beta_{x,j}\neq\epsilon$ 
   and, $\alpha_{x,i}$ and $\alpha_{x,j}$ mismatch\footnote{Two words $u,v$
    mismatch if there is a position $i$ such that $i\leq |u|,|v|$ and the $i$th letter
    of $u$ differs from the $i$th letter of $v$.}; 
\item (leaves) if $x$ is a leaf,  then  there is $w\in\Sigma_\inp^*$ such
  that 
  $u_{x}w \in \dom(\tra)$ and $u_{x}w\not\in \dom(\mathcal{D}_i)$ for
  all $i\in S_x$.
\end{enumerate}
 Intuitively, conditions $2$ and $5$ require that the words $u_x,v_x$ are 
 critical loops. The delay difference required in the definition of
 critical loops is not explicit here, but
 rather replaced by simple properties of words (condition 5), which
 are easier to check algorithmically. These properties are  not
 strictly equivalent to delay difference, but up to iterating the loop
 on $v_x$ a sufficient number of times, they are. 
 Conditions $1,3,4$ correspond to properties of the subsets met
when unfolding the recursive characterisation of
Theorem~\ref{thm:charac}. They also allow us to bound linearly the number
of nodes of a witness. As announced, all these conditions characterise
 the unrealisable multi-sequential specifications:

\begin{lemma}\label{coro:charac2}
    A multi-sequential specification defined by a trim transducer $\tra$ is
    not realisable by a sequential transducer if and only if there exists a
    witness for $\tra$.
\end{lemma}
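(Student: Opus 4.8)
The plan is to prove Lemma~\ref{coro:charac2} by showing each direction separately, with the bridge between the recursive characterisation of Theorem~\ref{thm:charac} and the combinatorial notion of witness being the crux. For the ``only if'' direction, I would assume $\tra$ is not realisable and build a witness tree by unfolding the failure of the recursive characterisation. Since $\tra$ is not realisable, by Theorem~\ref{thm:charac} there exists a critical loop $(u,v,\mathcal{X})$ for which \emph{no} admissible $\mathcal{Y}\subsetneq\mathcal{X}$ exists. I would set the root label to $(u,v,\{1,\dots,n\})$ (after first arguing that we may take the full index set at the root via condition~2's split into $N_x\cup L_x$, placing the transducers with no run on $u$ into $N_x$). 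The children are then obtained by considering, for the current critical situation, the way any candidate $\mathcal{Y}$ must fail one of conditions 1--3 of Theorem~\ref{thm:charac}; the recursion on condition~3 (realisability of the residual) is exactly what produces the subtree, and the strict inclusion $S_y\subsetneq L_x$ guarantees the tree is finite with linearly many nodes.

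For the ``if'' direction, I would assume a witness $t$ exists and show $\tra$ is not realisable, again via Theorem~\ref{thm:charac}. The key observation is that conditions 2 and 5 of the witness, \emph{after iterating the loop $v_x$ a bounded but sufficient number of times}, genuinely encode a critical loop in the sense of Definition~\ref{def:critloop}: the ``mismatch'' and length conditions in item~5 imply the delay difference $\delay(\alpha_{x,i},\alpha_{x,j})\neq\delay(\alpha_{x,i}\beta_{x,i},\alpha_{x,j}\beta_{x,j})$ once the loop is pumped enough that the mismatching positions or the length gap surface in the delay. I would then argue by induction on the tree, from the leaves upward, that the critical loop at the root admits no valid $\mathcal{Y}$: the partition (condition~4) together with the delay condition (condition~5) blocks condition~1 of the theorem for any $\mathcal{Y}$ crossing two children, while the leaf condition (condition~6) provides a residual-domain word witnessing the failure of condition~2, so that no subset can simultaneously satisfy all three conditions.

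The step I expect to be the main obstacle is the precise equivalence between the \emph{explicit word-combinatorial} conditions of the witness (item~5: mismatch of $\alpha_{x,i},\alpha_{x,j}$ or unequal $|\beta_{x,i}|,|\beta_{x,j}|$) and the \emph{semantic} delay-difference condition of a critical loop. These are, as the text warns, not strictly equivalent, and the reconciliation requires a pumping argument: I would show that if $\delay(\alpha_i,\alpha_j)=\delay(\alpha_i\beta_i,\alpha_j\beta_j)$ held for the once-traversed loop, then either $|\beta_i|=|\beta_j|$ and $\alpha_i,\alpha_j$ agree on their overlap (so item~5 fails), or the delay is preserved by a shift which forces periodicity; conversely, replacing $v_x$ by $v_x^N$ for $N$ large enough relative to the state-bound $M$ converts any genuine witness-mismatch into an honest delay difference. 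Quantifying this $N$ uniformly across the whole tree, so that a single iterated transducer exhibits all the critical loops simultaneously, is the delicate bookkeeping. The remaining pieces---finiteness of the tree from strict descent of the $S_x$, and the translation of condition~6 into a residual-domain inclusion failure---are routine once the loop-pumping correspondence is established.
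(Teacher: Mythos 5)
Your overall architecture is the right one, and it matches what the paper intends: the ``only if'' direction unfolds the recursive characterisation of Theorem~\ref{thm:charac} into the tree (a critical loop admitting no valid $\mathcal{Y}$ at the root, domain-coverage failures at the leaves, the induction hypothesis on the number of sequential transducers producing the internal subtrees), and the ``if'' direction collapses the tree back through the theorem. However, there is a genuine gap precisely at the step you single out as the crux, and it is that your pumping goes the wrong way. The implication from item~5 to a delay difference needs \emph{no} pumping at all: a length difference $|\beta_{x,i}|\neq|\beta_{x,j}|$ changes the length-difference invariant of the delay, and a mismatch between $\alpha_{x,i}$ and $\alpha_{x,j}$ freezes the common prefix, so appending the (not both empty) words $\beta_{x,i},\beta_{x,j}$ necessarily changes the delay. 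What genuinely requires an argument is the \emph{converse}, and it is exactly what your ``only if'' construction needs when you ``set the root label to $(u,v,\{1,\dots,n\})$'': a critical loop only gives a semantic delay difference, and item~5 may fail for it. Concretely, take $\alpha_i=\epsilon$, $\alpha_j=a$, $\beta_i=\beta_j=b$: then $\delay(\alpha_i,\alpha_j)=(\epsilon,a)\neq(b,ab)=\delay(\alpha_i\beta_i,\alpha_j\beta_j)$, yet $|\beta_i|=|\beta_j|$ and $\alpha_i,\alpha_j$ admit no mismatch (no position is defined in both). Your proposed fix --- replacing $v_x$ by $v_x^N$ --- does not repair this: it leaves $\alpha_i,\alpha_j$ untouched and keeps $|\beta_i^N|=|\beta_j^N|$, so item~5 fails for every $N$. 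The correct operation is to absorb loop iterations into the \emph{prefix}, $u_x\mapsto u_xv_x^k$ (keeping the loop), which turns the pair $(\alpha_i,\alpha_j)$ into $(b^k,ab^k)$ and makes the mismatch explicit. Without this, transducers $i$ and $j$ as above can be placed neither in different children (item~5 would be violated) nor in the same child (that child then fails condition~1 of the theorem, so the negation of the theorem no longer forces it to fail conditions~2 or~3, and your leaf-versus-recursion dichotomy breaks down); as written, the construction of the witness fails.

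A second, smaller omission is in your ``if'' direction. For a candidate $\mathcal{Y}$ strictly contained in the set $S_y$ of an \emph{internal} child, the induction gives non-realisability of the residual union over $S_y$, which does not by itself refute realisability of the residual union over $\mathcal{Y}$ (non-realisability does not pass to sub-unions in general). You need the extra observation that if $\mathcal{Y}$ satisfies condition~2 of Theorem~\ref{thm:charac}, i.e.\ $u^{-1}\dom(\tra)=\bigcup_{i\in\mathcal{Y}}u^{-1}\dom(\mathcal{D}_i)$, then the domains of the two residual unions coincide, so any sequential realiser of $\bigcup_{i\in\mathcal{Y}}(u,\ell)^{-1}\inter{\mathcal{D}_i}$ is also a realiser of $\bigcup_{i\in S_y}(u,\ell)^{-1}\inter{\mathcal{D}_i}$, contradicting the induction hypothesis; hence condition~3 fails for $\mathcal{Y}$. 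With that observation, and with the prefix-absorption fix above in place of loop-exponentiation, your plan goes through.
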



\begin{theorem}\label{thm:complexityasynchronous}
    The realisability problem by some sequential transducer of a specification
    defined by a multi-sequential transducer is \textsf{PSpace-c}. 
\end{theorem}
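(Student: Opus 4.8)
The plan is to prove \textsf{PSpace-completeness} by establishing the two bounds separately, leveraging the machinery already set up in the excerpt. The hardness direction is immediate: the reduction from the emptiness of an intersection of $n$ DFAs, detailed in Section~\ref{sec:PSpaceHardness}, produces a multi-sequential specification $S=\bigcup_{i=1}^n(S_i\cup N_i)$ that is sequentially realisable (by the identity) if and only if $\bigcap_{i=1}^n L(\aut_i)=\varnothing$. Since that intersection-emptiness problem is \textsf{PSpace-hard}~\cite{conf/focs/Kozen77}, so is our realisability problem. Thus it remains to show membership in \textsf{PSpace}.

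For the upper bound, the strategy is to decide \emph{non-realisability} in \textsf{NPSpace} and invoke Savitch's theorem together with closure of \textsf{PSpace} under complement. By Lemma~\ref{coro:charac2}, after trimming $\tra$ (which is polynomial), $\inter{\tra}$ is not realisable iff there exists a witness $t$ for $\tra$. The key structural fact is that the size of a witness is controlled: conditions $1$, $3$ and $4$ force the sets $S_x$ labelling nodes to strictly decrease along branches and to partition at each node, so the tree has at most $n$ nodes in total (the depth is bounded by $n$ since $|S_y|<|S_x|$ for a child $y$, and the branching is bounded since the $S_y$ partition $L_x$). First I would therefore guess the witness tree \emph{shape} together with the subsets $S_x$, $N_x$, $L_x$ at each node, all of which is polynomial-size information.

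The heart of the argument is that the word components $u_x,v_x$ of the labels can be guessed \emph{on the fly} rather than stored explicitly, so that only polynomial space is ever used. The hard part will be verifying conditions $2$, $5$ and $6$ without materialising the (a priori exponentially long) witness words. For condition $2$ I would guess $u_x$ and $v_x$ symbol by symbol while simulating in parallel the sequential transducers $\mathcal{D}_i$ for $i\in S_x$, keeping only the current tuple of states (polynomial) and verifying that each $i\in L_x$ returns to the same state after reading $v_x$ and that each $i\in N_x$ has no run on $u_x$. For condition $5$, the delay mismatch is expressed through the length comparison $|\beta_{x,i}|\neq|\beta_{x,j}|$ and a positional mismatch between $\alpha_{x,i}$ and $\alpha_{x,j}$; these are precisely the reasons the authors reformulated delays as ``simple properties of words,'' and they can be checked by counting output lengths and guessing a single mismatching position, both doable in polynomial space. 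Condition $6$ (leaves) reduces to a non-emptiness / non-inclusion test among the residual domains, solvable by any standard \textsf{PSpace} automaton-inclusion procedure on the states reached after reading $u_x$.

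Care must be taken with the monotonicity constraint of condition $3$, which requires $u_x$ to be a prefix of $u_y$ for a child $y$: rather than re-guess $u_y$ from scratch, I would guess it as the extension $u_x w$ and thread the parallel simulation of the surviving transducers through the whole branch, so that each transducer is simulated along one nested sequence of prefixes and the total working memory is the product of the branch depth (at most $n$) with the polynomial per-level state information. Because the authors explicitly note that the number of nodes is linearly bounded, the whole nondeterministic procedure runs in polynomial space; complementing via Savitch and the Immerman--Szelepcs\'enyi closure then yields a deterministic \textsf{PSpace} algorithm for realisability, completing the proof.
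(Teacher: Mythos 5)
Your hardness direction matches the paper's and is fine. For the upper bound, your overall skeleton (negate via Lemma~\ref{coro:charac2}, guess a linear-size tree of index sets, then guess the words $u_x,v_x$ on the fly while simulating the $\mathcal{D}_i$ in parallel) is natural, but it has a genuine gap at its core: completeness of the guessing procedure. A nondeterministic polynomial-space machine can only maintain counters holding at most exponential values, so your checks for condition~5 (the length comparison $|\beta_{x,i}|\neq|\beta_{x,j}|$ and a guessed mismatch position between $\alpha_{x,i}$ and $\alpha_{x,j}$) are sound only for witnesses whose words $u_x,v_x$ are at most exponentially long. You never show that if a witness exists, one of exponentially bounded length exists. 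This is not automatic: the obvious pump-down argument (cutting loops in the product of the surviving transducers) preserves the regular conditions 2, 3, 4 and 6, but can destroy condition 5, because cutting a loop shifts the difference $|\beta_{x,i}|-|\beta_{x,j}|$ (possibly to zero) and can erase the mismatch between $\alpha_{x,i}$ and $\alpha_{x,j}$; these are precisely non-regular properties. Note that in the synchronous case the paper's proof of Theorem~\ref{thm:complexitysync} explicitly includes such a witness-bounding step; your proposal implicitly assumes the analogous, harder, asynchronous statement for free.

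The paper's own proof avoids this issue by a different device: for each guessed tree \emph{skeleton} it constructs a counter machine $M_s$ whose counters make at most one reversal, recognising exactly the sequences of words that would complete the skeleton into a witness; the counters implement your length comparisons and mismatch checks. Emptiness of reversal-bounded counter machines is decidable in \textsf{NLogSpace} in the size of the machine \cite{DBLP:journals/jcss/GurariI81}, with no a priori bound on input lengths or counter values, so running that test on the exponential-size, on-the-fly-constructed $M_s$ yields \textsf{PSpace} without ever bounding the witness words. To repair your argument you would either have to prove an exponential bound on witness word lengths (a nontrivial word-combinatorics lemma, in the spirit of Proposition~\ref{prop:boundeddelay}), or delegate the unbounded counting to a decidable emptiness problem as the paper does. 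Two smaller points: the witness tree has at most $2n-1$ nodes rather than $n$ (siblings partition $L_x$ and each is a proper subset, so every internal node has at least two children), and Immerman--Szelepcs\'enyi is not needed, since \textsf{PSpace} is closed under complement already by Savitch's theorem.
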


\begin{proof}[Sketch]
    \textsf{PSpace}-hardness has been shown in Section~\ref{sec:PSpaceHardness}.
    To show \textsf{PSpace}-easyness, we reduce the problem to deciding the
    emptiness of the language of a counter machine, whose counters make at most $1$
    reversal (i.e. move from increasing to decreasing mode).
    This is known to be in \textsf{NLogSpace}
    \cite{DBLP:journals/jcss/GurariI81}. Our machine is exponentially large (in
    the transducer defining the specification), but can be constructed on the
    fly, hence we get \textsf{PSpace}.

    A bit more precisely, we first define the notion of \emph{skeleton} $s$,
which is a witness without the words $u_x,v_x$, hence there are finitely
many skeletons, each one of polynomial size. Given an enumeration $x_1\dots x_n$
in depth-first order of the nodes of $s$, we construct a counter machine $M_s$
which recognises sequences of the form
    $x_1 w_{x_1}\# v_{x_1} \dots  x_n w_{x_n} \# v_{x_n}$ such that if
    we extend any label of a node $x$ in $s$ with the pair of words
    $(w_{y_1}\dots w_{y_k}, v_x)$, where $y_1\dots y_k$ is the path
    from the root to $x$, we get a witness. Hence, there exists a
    witness iff there exists a skeleton $s$ such that $L(M_s)$ is
    non-empty. Our algorithm non-deterministically guesses a skeleton
    and runs a procedure to check in \textsf{PSpace} the emptiness of
    $M_s$. 

    Let us intuitively explain how $M_s$ works. Conditions $1,3,4$ and
    $6$ are regular, so no counter is needed there. %
    Counters are only necessary to check Condition $5$, for instance to compute
the length of the words $\beta_{x,i}$, and to check the existence of a mismatch
between a word $\alpha_{x,i}$ and a word $\alpha_{x,j}$. First, a mismatch
position $m$ is guessed, by incrementing for some time two counters $c_{i,x}$
and $c_{j,x}$ in parallel. Then, they are decremented according to
the length of outputs produced by simulating the transitions of $\mathcal{D}_i$
and $\mathcal{D}_j$ respectively. When one of them reaches $0$, say $c_{i,x}$,
we store the $m$th symbol of the output of $\mathcal{D}_i$ on $u_x$ in memory.
We do the same for $c_{j,x}$ and later on check that the two stored symbols are
different. 
\end{proof}

\section{Conclusion} We have identified a class of 
specifications (whose membership is decidable in {\sf PTime}), for which the sequential
realisability problem is {\sf PSpace-c}, both in the asynchronous and
synchronous settings. This is in contrast to the general case, which
is {\sf ExpTime-c} for synchronous specifications, and undecidable in
the asynchronous case. Our procedure
allows to synthesise a sequential transducer whenever the
specification is realisable, and allows for incremental testing, via
the solvability of a two-player game parameterised by the longest
output allowed to be queued by a realiser before being output.

While the class of multi-sequential specifications is natural, as the closure of
graphs of sequential functions under finite unions, we believe that it may also
be interesting for practical applications. In particular, Vardi and Lustig have
defined the concept of synthesis from component
libraries~\cite{journals/sttt/LustigV13}, in the synchronous setting, over
infinite words. In this setting, given a set of components (synchronous
sequential transducers over finite words), a specification $S$ over infinite
words, the question is whether the components can be arranged in such a way
which realises the specification (by linking the final states of the components
to the initial state of another component). This problem was shown to be decidable.
We would like to investigate another way of reusing existing components, which
is tightly related to multi-sequential specifications: given components
$C_1,\dots,C_n$ represented as sequential transducers and a specification $S$,
decide whether there exists a sequential function $f$ such that $f$ and $S$ have
the same domain, $f\subseteq \bigcup_i C_i$ and $f$ satisfies $S$. This is
beyond the scope of this paper but we plan to investigate further this question
in the near future.

\bibliography{biblio}

\end{document}